\newtheorem{theorem}{Theorem}[section]
\newtheorem{definition}[theorem]{Definition}
\newtheorem{lemma}[theorem]{Lemma}
\newtheorem{proposition}[theorem]{Proposition}
\newtheorem{remark}[theorem]{Remark}
\newtheorem{example}[theorem]{Example}
\newcommand{\proof}{Proof:\ }
\newcommand{\qed}{\hfill $\Box$ \par\medskip}
\def\newsymbol#1#2#3#4#5{\let\next@\relax%
 \ifnum#2=\@ne\else%
 \ifnum#2=\tw@\let\next@\msyfam@\fi\fi%
 \mathchardef#1="#3\next@#4#5}
\def\mathhexbox@#1#2#3{\relax%
 \ifmmode\mathpalette{} {\m@th\mnnathchar"#1#2#3}
 \else\leavevmode\hbox{$\m@th\mathchar"#1#2#3$}\fi}
\font\tenmsy=msbm10
\font\sevenmsy=msbm7
\font\fivemsy=msbm5
\edef\msyfam@{\hexnumber@\msyfam}
\newcommand{\B}{{\mathcal{B}}}
\newcommand{\C}{{\mathcal{C}}}
\newcommand{\W}{{\mathcal {W}}}
\newcommand{\X}{{\ms X}}
\newcommand{\D}{\tilde{\X}}
\newcommand{\XX}{{{\mathfrak X}}}
\newcommand{\DD}{{\tilde{\mathfrak X}}}
\newcommand{\CC}{{{\mathbb C}}}
\newcommand{\RR}{{\mathbb R}}
\newcommand{\XXQ}{\XX_\hir}
\newcommand{\Spec}{{\rm Spec}}
\newcommand{\WIS}{{U_{\rm f}}}
\newcommand{\ppp}{{\tilde{\W}}}
\newcommand{\Y}{\XX_Q}
\newcommand{\F}{\mathcal {F}}
\newcommand{\N}{{\mathcal {N}}}
\renewcommand{\P}{{\mathcal {P}}}
\newcommand{\Q}{{\mathcal {Q}}}
\newcommand{\M}{{\mathcal {M}}}
\newcommand{\G}{{\mathcal {G}}}
\newcommand{\LN}{L_{\rm N}}
\newcommand{\xx}{b}
\newcommand{\eq}[1]{\begin{equation}\label{#1}}
\newcommand{\en}{\end{equation}}
\newcommand{\BR}{{\RR^d}}
\newcommand{\Ebb}{{\mathbb E}}
\newcommand{\TT}{J_\eps}
\newtheorem{assumption}{Assumption}
\newcommand{\TTT}[1]{{\bf (#1)}}
\newcommand{\la}{\lambda}
\newcommand{\eps}{\varepsilon}
\newcommand{\bi}{\begin{description}}
\newcommand{\ei}{\end{description} }
\newcommand{\bd}[1]{\begin{definition}\label{#1}}
\newcommand{\ed}{\end{definition}}
\newcommand{\bl}[1]{\begin{lemma}\label{#1}}
\newcommand{\el}{\end{lemma}}
\newcommand{\bc}[1]{\begin{corollary}\label{#1}}
\newcommand{\ec}{\end{corollary}}
\newcommand{\bt}[1]{\begin{theorem}\label{#1}}
\newcommand{\et}{\end{theorem}}
\newcommand{\bp}[1]{\begin{proposition}\label{#1}}
\newcommand{\ep}{\end{proposition}}
\newcommand{\br}[1]{\begin{remark}\label{#1}}
\newcommand{\er}{\end{remark}}
\newcommand{\kak}[1]{(\ref{#1})}
\newcommand{\LM}{{L^2({\rm M})}}
\newcommand{\LP}{{L^2({\rm P})}}
\newcommand{\LR}{{L^2(\BR)}}
\newcommand{\K}{{\mathcal  K}}
\newcommand{\KKK}{\varphi_{\rm pg}}
\newcommand{\fff}{{\ffff_{\rm b}}}
\newcommand{\ffff}{\mathscr{F}}
\newcommand{\FF}[1]{\tilde F_{#1}(f)}
\newcommand{\FFF}[1]{F_{#1}(f)}
\newcommand{\ms}[1]{\mathscr{#1}}
\newcommand{\is}{\inf{\Spec}}
\newcommand{\f}{^{-1}}
\newcommand{\lk}{\left(}
\newcommand{\rk}{\right)}
\newcommand{\lkk}{\left\{}
\newcommand{\rkk}{\right\}}
\newcommand{\pro}[1]{(#1_t)_{t\in\RR}}
\newcommand{\Ree}{\mbox{Re} }
\newcommand{\add}{a^{\ast}}
\newcommand{\hp}{H_{\rm p}}
\newcommand{\hn}{H_{\rm N}}
\newcommand{\lp}{L_{\rm p}}
\newcommand{\ov}[1]{\overline{#1}}
\newcommand{\hf}{H_{\rm f}}
\newcommand{\gr}{\varphi_{\rm g}}
\newcommand{\ggr}{\tilde \varphi_{\rm g}}
\newcommand{\grp}{\varphi_{\rm p}}
\newcommand{\half}{\frac{1}{2}}
\newcommand{\han}{{1/2}}
\newcommand{\hir}{{\mathbb M}}
\newcommand{\hirr}{{\hir^{[0,\infty)}}}
\newcommand{\hi}{H_{\rm i}}
\def\bbbone{{\mathchoice {\rm 1\mskip-4mu l} {\rm 1\mskip-4mu l}
{\rm 1\mskip-4.5mu l} {\rm 1\mskip-5mu l}}}
\def\one{\bbbone}
\renewcommand{\SS}{{\mathbb S}}
\newcommand{\s}{\sigma}
\newcommand{\hhh}{{\mathscr{H}}}
\renewcommand{\d}{\displaystyle}
\newcommand{\vp}{{\hat  \varphi}}
\newcommand{\vpp}{{\tilde \varphi}}
\newcommand{\non}{\nonumber}
\newcommand{\proo}[1]{(#1_t)_{t\geq 0}}
\providecommand{\seq}[1]{(#1_n)_{n\in \mathbb{N}}}
\begin{document}

\makeatletter
\renewcommand\@dotsep{10000}
\makeatother


\title{Functional central limit theorems and $P(\phi)_{1}$-processes for the relativistic  and non-relativistic Nelson models }
\author{
Soumaya Gheryani, \thanks{College of Science, Jazan University Saudi Arabia, sgheryani@jazanu.edu.sa}
Fumio Hiroshima, \thanks{Faculty of Mathematics, Kyushu University, Fukuoka, Japan, hiroshima@math.kyushu-u.ac.jp}
J\'ozsef L\H{o}rinczi, \thanks{Department of Mathematical Sciences, Loughborough University, UK, J.Lorinczi@lboro.ac.uk}\\ 
Achref Majid,\thanks{Ecole Sup\'erieure Priv\'ee d'Ing\'enieurie et des Technologies (ESPRIT), Tunisie, achref.lemjid@esprit.tn}
Habib Ouerdiane \thanks{Faculty of Science, University of Tunis El Manar, Tunisia, habib.ouerdiane@fst.rnu.tn}
}
\maketitle
\begin{abstract}
We construct $P(\phi)_1$-processes indexed by the full time-line, separately derived from the functional integral
representations of the relativistic and non-relativistic Nelson models in quantum field theory. These two cases
differ essentially by sample path regularity. Associated with these processes we define a martingale which, under
an appropriate scaling, allows to obtain a central limit theorem for additive functionals of these processes. We
discuss a number of examples by choosing specific functionals related to particle-field operators.
\bigskip
\noindent

{\it  Key-words}:  relativistic and non-relativistic Nelson models, 
relativistic Schr\"odinger operators, ground states,
Feynman-Kac representations, jump processes and diffusions, functional central limit theorems

\medskip
\noindent
2010 MS Classification:  47D07, 47D08, 60J75, 81Q10

\end{abstract}

\section{Introduction}
\subsection{Definition of the non-relativistic and relativistic Nelson Hamiltonian}
Nelson's model of an electrically charged spinless quantum mechanical particle coupled to a scalar boson field \cite{nel64a,nel64b}
has been much studied recently in rigorous quantum field theory. 
In this paper we are interested in 
constructing Markov  processes associated with 
both the relativistic and non-relativistic Nelson models, and 
studying some properties of both models. 
These properties will be formulated in terms of central limit theorem-type behaviours of functionals of the particle-field operators. 
Our main contribution is a technical development to quantum field theory of L\'evy or
L\'evy-type processes in infinite dimensions which, to the best of our knowledge, has not been attempted before.

The Nelson Hamiltonian is defined by a self-adjoint operator of the form
\begin{equation}
\label{nelsonH}
\hn  = \hp  \otimes \one  + \one  \otimes \hf  + \hi,
\end{equation}
on the Hilbert space 
$$\hhh=\LR\otimes \fff,$$
where $\fff= \oplus_{n=0}^\infty L^2_{\rm sym}(\RR^{dn})$ denotes the boson Fock space over $\LR$. 
Here $\fff^{(n)}=  L^2_{\rm sym}(\RR^{dn})$. 
The Fock space can be identified with the space of $\ell_2$-sequences
$(\psi^{(n)})_{n\in \mathbb{N}}$ such that $\psi^{(n)}\in  \fff^{(n)}$ and
$\|\psi\|_{\fff}^2=\sum_{n=0}^{\infty}  \|(\psi^{(n)})\|^2_{\fff^{(n)}}<\infty$. 
We denote the annihilation and creation operators by $a(f)$ and $\add (f)$, $f\in \LR $,
respectively, satisfying the canonical commutation relations
$$[a(f),\add (g)]=(\bar{f},g)\one,\quad 
[a(f),a(g)]=0=[\add (f),\add (g)]$$
on a dense domain of  $\fff$. 

The components of $\hn$  describe
the Hamilton operators of the quantum mechanical  
particle, free field, and particle-field interaction, respectively.

Let $\varphi:\BR \rightarrow\RR $ be a function describing the charge distribution of the particle, denote by $\vp$
its Fourier transform, and write $\widetilde{\hat  {\varphi}}(k)=\hat  {\varphi}(-k)$. For every $x\in\BR$, define
\eq{hi}
\hi (x)=\frac{1}{\sqrt{2}}\left( \add ({\hat  {\varphi}e^{-ikx}}/{\sqrt{\omega}})+
a({\widetilde{\hat  {\varphi}}e^{ikx}}/{\sqrt{\omega}})\right).
\en
We define the {interaction Hamiltonian} $\hi : \hhh \rightarrow\hhh$ by 
the constant fiber direct integral 
$$(\hi \Psi)(x)=\hi (x)\Psi(x)$$ for $\Psi\in \hhh $ such that $\Psi(x)\in D(\hi(x))$, for almost every
$x\in\BR$. Here we use the identification
$$\hhh\cong \int_\BR^\oplus \fff dx.$$
Formally, this operator is then written as
\eq{e}
\hi (x)=\int_\BR\frac{1}{\sqrt{2\omega(k)}}\left(\hat  {\varphi}(k)e^{-ik\cdot x}\add (k)
+ {\hat  {\varphi}(-k)}e^{ik\cdot x}a(k)\right)dk,
\en
where $a^*(k)$ and $a(k)$ are formal kernels of the creation and annihilation operators, respectively.

Next denote by $$d\Gamma(\omega):\fff\to\fff$$ the second quantization of 
the multiplication operator 
$$\omega(k)=\sqrt{|k|^{2}+\nu^{2}},\quad \nu\geq0$$
acting in $\LR$, defined by
$$d\Gamma(\omega)\Phi^{(n)}(k_1,\ldots,k_n)=(\sum_{j=1}^n \omega(k_j))\Phi^{(n)}(k_1,\ldots,k_n).$$ 
Here $\omega$ denotes the dispersion relation with $\nu$ giving the mass of a single boson. 
We set 
$$\hf=d\Gamma(\omega).$$
Formally, the free field Hamiltonian can be
written as 
\begin{align}\label{hf1}
\hf =\int_\BR \omega(k) \add (k)a(k)dk.
\end{align}

The non-relativistic  and relativistic Nelson Hamiltonians differ by the choice of  particle Hamiltonian $\hp$. Consider the operator
$$
\hp = (-\hbar^2 c^2 \Delta + m^2c^4)^{1/2} - mc^2 + V ,
$$
on $L^2(\RR^d)$, where $V$ is a multiplication operator describing the potential, $m \geq 0$ is the rest mass of the particle,
and $\hbar, c$ are physical constants (Planck's constant and the speed of light, respectively). 
We put the kinetic term by 
$$\hp^0=(-\hbar^2 c^2 \Delta + m^2c^4)^{1/2}.$$
This operator has been much
studied 
 in mathematics as the (semi-) relativistic operator \cite{BE11,cms,H77,W74}. 
We will consider the following variants of this operator in
this paper:
\begin{description}
 \item[\bf (massive relativistic case)]
in the case $m > 0$ and in units in which $c = \hbar = 1$ we have the {massive relativistic Schr\"odinger operator}
\begin{equation}
\label{rela}
\hp = \hp^0 + V \quad \mbox{with} \quad \hp^0 = (-\Delta + m^2)^{1/2} - m
\end{equation}
\item[\bf (massless relativistic case)]
in the case $m=0$ and in units in which $c = \hbar = 1$ we have the {massless relativistic Schr\"odinger operator}
\begin{equation}
\label{masslos}
\hp = \hp^0 + V \quad \mbox{with} \quad \hp^0 = (-\Delta)^{1/2}
\end{equation}
\item[\bf (non-relativistic case)]
in the limit $c \to \infty$ and in units in which $\hbar = m = 1$ we have the {non-relativistic Schr\"odinger
operator}
\begin{equation}
\label{classi}
\hp = \hp^0 + V \quad \mbox{with} \quad \hp^0 = -\frac{1}{2}   \Delta.
\end{equation}
\end{description}

When $\hp$ is given by \eqref{rela} or \eqref{masslos}, we say that the operator $\hn $ is the
{relativistic Nelson Hamiltonian}, and for \eqref{classi} we call it {non-relativistic  Nelson Hamiltonian}.
Let $\Spec (K)$ be the spectrum of self-adjoint operator $K$. 
We will use the following standing assumptions throughout this paper.

\begin{assumption} \label{assumption}
{\rm
The following conditions hold:
\begin{description}
\item[(i)]
\label{r1}$\overline{\hat  {\varphi}(k)}=\hat  {\varphi}(-k)$
and
$\hat  {\varphi}/\sqrt{\omega},
\hat  {\varphi}/\omega\in L^2(\mathbb{R}^d)$.
 \item[(ii)]
The potential $V$ is of Kato-class in the sense of Definition \ref{kato} below.
 \item[(iii)]
For any of the three choices \eqref{rela}-\eqref{classi}, the operator $\hp $ has a unique, strictly positive
ground state $\grp \in D(H_{\rm p})$, with $H_{\rm p} \grp =E_{\rm p}\grp $, $\|\grp\|_{L^2({\BR})} = 1$, 
where 
$$E_{\rm p}=\inf \Spec({\hp}).$$
 \item[(iv)]
For any of the three choices \eqref{rela}-\eqref{classi}, the operator $\hn $ has a unique, strictly positive
ground state $\gr \in D(\hn )$ with $\hn \gr=E\gr$, $\|\gr\|_{\hhh} = 1$, and 
$\||x|^{d+1}\gr\|<\infty$,  
 where 
 $$E=\inf \Spec(\hn ).$$
\end{description}
}
\end{assumption}

\begin{remark}
Using (i) of Assumption \ref{assumption} it is seen that $\hi$ is symmetric, and infinitesimally small with respect to $\hp\otimes\one+\one\otimes\hf$,  
thus $\hn $ is self-adjoint on $D(\hp\otimes\one+\one\otimes\hf)$. 
\end{remark}

\begin{remark}
If the  infrared regular condition $\int_\BR|\vp(k)|^2/\omega(k)^3 dk<\infty$ is satisfied,
then the existence of the ground state of $\hn$ holds. 
This is shown in e.g.,  
\cite{bfs2,ge,ghps2,hhs,spo98}. 
\end{remark}

For the non-relativistic Nelson Hamiltonian 
various aspects concerning  spectral properties are known. 
IR divergences are studied in \cite{A01,ghps3,LMS02a,LMS02b},  
UV divergence in \cite{ghps4,GHL14,HM18, MM17,nel64a,nel64b} and a Gibbs measure associated with the ground state  in \cite{B02}. 
So far, most papers have
focused on the non-relativistic  Nelson model; 
for a monograph-length discussion see also \cite{LHB}.

\begin{remark}\label{X}
We give a remark on the spatial decay of the ground state $\gr$. 
In the case of \eqref{rela} and \eqref{classi} 
it is known that 
\eq{ex}
\|(e^{\beta|x|}\otimes \one) \gr\|<\infty
\en
is satisfied 
when (i) $\underset {|x|\to\infty}  \lim V(x)=\infty$ or  
(ii)~$\underset{|x|\to\infty} \lim V_{-}(x)-E>0$, where $V_{-}(x)=-\min\{0,V(x)\}$. 
We refer to \cite{LHB}. 
In the case of \eqref{masslos},  
when $\underset{|x|\to\infty} \lim V(x)=\infty$, 
\kak{ex} holds true, but 
when $\underset {|x|\to\infty} \lim V(x)-E>0$, 
$$\|(|x|^{d+1}\otimes \one) \gr\|<\infty$$  
is satisfied. See \cite[Theorem 5.15]{hir14} and \cite{cms}. 
\end{remark}

\begin{remark}\label{NN}
Let 
$
\phi(f)=\frac{1}{\sqrt 2}\int_\BR (\add(k) \hat f(k)+a(k)\hat f(-k)) dk
$
 be the  field operator. 
It can be seen in a similar way to \cite{hhl14} that 
$\|e^{\beta \phi(f)^2}\gr\| <\infty$ for $\beta<1/(2\|f\|)$. 
\end{remark}

\subsection{Main results}
Although the models discussed here are defined in terms of self-adjoint operators on a joint particle-field space of functions, for our purposes a Feynman-Kac type approach will be more suitable. Then the related evolution semigroups can
be represented in terms of averages over 
suitable random processes. As it will be seen below in detail, in
the relativistic case we have an infinite-dimensional jump process, and in the non-relativistic case an infinite-dimensional
diffusion.
\begin{definition}[Ground state transform]
Let $K$ be a self-adjoint operator acting in $L^2(M,dm)$ with a measurable space $(M,m)$ and $\psi$ the  strictly positive ground state of $K$ with $K\psi=\kappa \psi$, where $\kappa=\is(K)$. 
Let $M_f$ denote the multiplication by $f$. 
We call the map 
$$K\mapsto L_K=M_{\frac{1}{\psi}}(K-\kappa)M_{\psi}$$ the ground state transform of $K$. 
\end{definition}
For notational convenience 
we write $L_K$ simply as 
$$\frac{1}{\psi}(K-\kappa){\psi}$$ in what follows. 
In  probability theory this transform is usually called Doob's $h$-transform. 
We see that $L_K$ is also self-adjoint on 
$D(L_K)=\{f\in L^2(M,\psi^2dm)\mid f\psi\in D(K)\}$. 
Informally a Markov  process possessing  $K$ as a generator is called 
a $P(\phi)_1$-process. 
The exact definition of  $P(\phi)_1$-processes is given in Definition \ref{matu}. 

Now we explain probabilistic notations used in this paper. 
Let $(X,{\mathcal  B}(X),m)$ be a probability space. The expectation value of measurable function $f:X\to \CC$ with respect to $m$ is denoted by $\Ebb_m[f]$, i.e., 
$$\Ebb_m[f]=\int_X f(x) dm.$$ 
Let $(q_t)_{t\geq0}$ be an $\RR^d$-valued random process on 
$(X,{\mathcal  B}(X),m^x)$, where $m^x$,$x\in \RR^d$,  is a probability measure such that $m^x(q_0=x)=1$.
In this case we often  use  notation $\Ebb_m^x$ instead of 
$\Ebb_{m^x}$.   Furthermore we write $\Ebb^x$ for $\Ebb_m^x$ unless confusion may arise.

We present a simple example of  a $P(\phi)_1$-process. 
The Ornstein-Uhlenbeck process is a $P(\phi)_1$-process. 
Let $h=-\frac{1}{2}\Delta+\half x^2$ be the one-dimensional harmonic oscillator defined on $L^2(\RR)$. 
We  see that $h$ has the strictly positive ground state $\psi(x)=\pi^{-\frac{1}{4}}e^{-\frac{x^2}{2}}$ satisfying 
$h\psi=\half  \psi$.  
The ground state transform of $h$ is given by 
 $$L_h=\frac{1}{\psi} (h-\half)\psi.$$ 
 The operator $L_h$ is self-adjoint  
acting in $L^2(\RR,\psi^2 dx)$
 and notice that $\psi^2 dx $ is 
a probability measure on $\RR$.
Also, $L_h\one=0$. 
The Ornstein-Uhlenbeck process $(y_t)_{t\geq0}$ is a diffusion process on a probability space 
and fullfils  
$$(f, e^{-tL_h} g)_{ L^2(\RR,\psi^2 dx)}=\int_{\RR} \Ebb^x[\bar f(y_0) g(y_t)]\psi^2(x) dx.$$
We can see that 
$$L_hf=
-\half \Delta f+x \nabla f
$$ and 
$(y_t)_{t\geq0}$ is a solution of SDE
$$dy_t=y_t dt +dB_t,$$
where $(B_t)_{t\geq0}$ is  Brownian motion. 

This can be extended to more general $d$-dimensional Schr\"odinger operators $\hp$ in \cite{LHB,LO12}. 
Let $$L_{\rm p}=\frac{1}{\grp}(\hp-E_{\rm p})\grp$$ and 
the non-relativistic case 
formally
we can see that 
$$L_{\rm p}f=-\half \Delta f-\frac{\nabla \grp}{\grp}\cdot \nabla f.$$ 
We denote by $(z_t)_{t\geq0}$ the $P(\phi)_1$-process associated with $\hp$ on a
probability space.  
$(z_t)_{t\geq0}$  is a possible solution of SDE
\begin{align}\label{SDE}
dz_t=\frac{\nabla \grp}{\grp}(z_t)dt +dB_t.
\end{align}
It is however not trivial to see the regularity of $\grp$ and the existence of the solution of 
\eqref{SDE}.
Nevertheless 
in \cite{LHB} 
 a $P(\phi)_1$-process  associated with 
a  non-relativistic Schr\"odinger operator $\hp $  is constructed and 
 in \cite{LO12} 
that for a relativistic case. 
This implies that 
\begin{align}
\label{FKF}
&(f, e^{-tL_{\rm p}} g)_{ L^2(\BR,\varphi_{\rm p}^2dx)}=\int_{\BR} 
\Ebb^x[\bar f(z_0) g(z_t)]\varphi_{\rm p}^2(x) dx,\\
&
(e^{-tL_{\rm p}} g)(x)=\Ebb^x[g(z_t)].\non
\end{align}
In the relativistic case we have $z_t=b_t$ and in the non-relativistic case 
$z_t=B_t$. See \kak{bp} and \kak{Bp}. 
In this paper we consider a $P(\phi)_1$-process associated with $\hn$. 
By a joint use of the  ground state-transform of $\hp$ and the Wiener-Segal-It\^o isomorphism $\fff\cong L^2(Q,d{\rm G})$  (see Section 2),
the components of $\hn $ can be mapped unitarily in a suitable space of functions. 
$\hn$ is transformed to the self-adjoint operator 
$$H=L_{\rm p} \otimes\one +\one \otimes \tilde{\hf}+\tilde\hi$$
in $L^2(\BR,\grp^2 dx)\otimes L^2(Q,d{\rm G})$. To simplify the notations, we write again $\hf$ for $\tilde{\hf}$, and $\hi$ for $\tilde\hi$.
The free field component ${\hf}$ gives rise to a $Q$-valued 
Ornstein-Uhlenbeck process $(\xi_t)_{t\in\RR}$ on a probability space 
$(\Y,\B(\Y),\G)$ (see \kak{10}) and 
it follows that 
$$(\Phi, e^{-t{\hf}}\Psi)=\Ebb_{\G}[\Phi(\xi_0)\Psi(\xi_t)].$$
The  particle  operator $L_{\rm p}=\grp^{-1}(\hp-E_{\rm p})\grp$
contributes by a $P(\phi)_1$-process $(z_t)_{t\geq0}$ which is a jump L\'evy-type process $(b_t)_{t\geq0}$ in the cases \eqref{rela}-\eqref{masslos} and a diffusion 
$(B_t)_{t\geq0}$ (Brownian motion) 
in the case \eqref{classi}, 
and it follows \eqref{FKF}. 
This allows to derive a Feynman-Kac type
representation of $e^{-tH}$  involving an infinite-dimensional joint particle-field random process $(z_t,\xi_t)_{t\geq0}$ generated by
\begin{equation}
\label{free}
H_0=L_{\rm p} \otimes\one +\one \otimes {\hf}.
\end{equation}

Since the semigroup operators $e^{-tH }$ are not Markovian, 
 the modified process becomes a Markov process only after a ground state transform is performed on $H $:
$$\LN =\frac{1}{\gr}(H-E)\gr.$$
See formula \eqref{nelgst} below. 
Our interest in this paper is to construct  $P(\phi)_1$-processes associated with  
$\LN$
and to derive the counterpart of \eqref{FKF}. 
The first main theorem in this paper is as follows. 

{\bf Main theorem 1.}\ 
{\it There exists a $P(\phi)_1$-process associated with $\LN $.} \\
The first main thorem is stated as Theorems \ref{relnel} for the relativistic case 
and Theorem \ref{classnel} for the non-relativistic case below. 

A major difference between the two cases appears in the path regularity (and thus support) properties of the corresponding 
$P(\phi)_1$-processes. 
While in the non-relativistic  case we have infinite-dimensional diffusions on a  sample space of functions 
allowing continuous paths, in the relativistic case we have more technical difficulties resulting from paths 
with jump discontinuities. However, since we are in the context of concrete models, we are able to control the jumps and path 
oscillations directly instead of using abstract arguments.

\subsection{Functional central limit theorem}
The classical central limit theorem says that 
the partial sums 
$$S_n=\frac{1}{\sqrt n} \sum_{i=1}^n X_i$$ of 
independently and identically distributed 
random variables $X_i$ with mean zero and covariance $\Ebb[X_i^2]=v$
weakly converge to a Gaussian random variable with mean zero and 
covariance $v$. 
We see that 
$$\frac{1}{n} \Ebb\left[\left|\sum_{i=1}^n X_i\right|^2\right]=v$$ and 
$\underset{n\to\infty} \lim \Ebb[g(S_n)]\to\Ebb[g(X)]$,  where $X$ is a Gaussian random variable 
with covariance $v$, and $g$ is a measurable function.  
On the other hand a functional central limit theorem (FCLT) considers random process 
$(X_t)_{t\geq0}$  and the limit of 
\begin{align}
\label{KV}
\frac{1}{\sqrt n}\int_0^{nt} (L f)(X_s) ds
\end{align} with some function $f$.
Bhattacharya   \cite[Theorem  2.1]{bha82}  shows that 
for a stationary and ergodic Markov process $(X_t)_{t\geq0}$ with self-adjoint generator $-L$ it follows that 
\begin{align}
\label{Bh}
\underset {n \to \infty}{\lim}\frac{1}{\sqrt {n}}\int_0^{nt} (L f)(X_s) ds
 \stackrel{\rm d} = 2 \lk f, Lf \rk B_t.
 \end{align}
 Here $A\stackrel{\rm d} =B$ implies that 
 random variables $A$ and $B$ have the same distribution. 
 See Section 4 for the detail.

A FCLT for the non-relativistic  Nelson model has been first established by Betz and Spohn in a different approach \cite{BS}. They
have shown that under the Gibbs measure obtained from taking the marginal over the particle-generated component of the path measure,
the appropriately scaled process converges in distribution to Brownian motion having reduced diffusion coefficients. 
This means that the particle increases its effective mass due to the coupling to the boson field. 
A main observation in their paper is that one can
associate a martingale with functionals \eqref{KV} of the process, whose long time behaviour can be predicted by using a martingale convergence
theorem. This result is originally due to Kipnis and Varadhan \cite[Theorem 1.8]{kv86} for more general Markov processes, see also the
ground-breaking work \cite{Don}, and related problems studied in \cite {bha82,CCG,ledoux11}. For the relativistic Nelson model the
problem has not been studied so far. In our approach we use the $P(\phi)_1$-process to derive FCLT behaviours rather than to construct
Gibbs measures obtained after integrating out the quantum field variables. Although this involves working with effectively infinite
dimensional processes, it is also more robust, and applicable to a range of other quantum field or solid state physics models.

An offshoot of constructions of $P(\phi)_1$-processes is that it allows to derive a FCLT  for additive functionals of the processes, relating
to particle-field operators in the relativistic and non-relativistic Nelson models. 
We emphasize that such a behaviour results also for the relativistic case. 
In Section 3 we construct $P(\phi)_1$-process $\pro{\tilde X}$ 
 whose self-adjoint generator is 
$-\LN $ with relativistic $\hp$ and 
we are interested in studying FCLT for $\pro {\tilde X}$. 
We consider 
$$\FF{t}=\int_0^t f(\tilde X_s) ds$$
where $f\in D(\LN^{-1/2})$. 
Here $\pro {\tilde X}$ will be seen as a stationary and ergodic Markov process with invariant  initial measure $\rm M$. 
Then we arrive at the second main theorem:

{\bf Main theorem 2.}\ 
{\it It follows that 
$$\lim_{s\to\infty} \frac{1}{\sqrt{s}}\FF{st}=\SS(f) B_t$$
in the sense of the weak convergence, where 
$\SS(f)=2 \|\LN^{-1/2} f\|^2_{\LM}$ and 
$\proo B$ is Brownian motion.}\\
The second main theorem is stated in Theorem \ref{main5} for both relativistic and non-relativistic cases. 

\subsection{Organization of the paper}
In Section 2 we briefly discuss the functional integral representations of the
non-relativistic  and relativistic Nelson models. In  Section 3 we construct $P(\phi)_{1}$-processes associated with the two models, and
present detailed proofs of path regularity and support properties. Section 4 is devoted to proving a FCLT 
for additive functionals associated with the Nelson models by using the properties of the $P(\phi)_{1}$-processes. We show some
functionals of special interest for both cases and determine explicitly the variance in the related FCLT. 

\section{Functional integral representations of the Nelson Hamiltonians}

\subsection{Functional integral representation of the quantum mechanical particle  Hamiltonians}
Let $V: \BR \to \RR$ be a Borel-measurable function giving the potential. We denote the multiplication operator defined by $V$
by the same label. The Schr\"odinger operators \eqref{rela}-\eqref{classi} can be defined in the sense of perturbation theory
by choosing suitable conditions on $V$. However, since in this paper we will use methods of path integration, a natural choice
is Kato-class, in each case given in terms of the related random processes.

Let $\RR^+=[0,\infty)$ and $d\geq 1$. 
We will consider the following processes and path spaces.\\
(Cauchy and relativistic Cauchy process)
Let $$\D=D(\RR^+,\BR)$$ be the space of c\`adl\`ag paths (i.e., continuous from right with left limits) on $\RR^+$ with values in $\BR$,
and $\B(\D)$ the $\s$-field generated by the cylinder sets of $\D$. 
The operator $-\hp^0$ in \eqref{rela} is the infinitesimal
generator of a rotationally symmetric relativistic Cauchy process, and in \eqref{masslos} the infinitesimal generator of a
rotationally symmetric Cauchy (i.e., $1$-stable) process. 
We will denote by $\proo b$, the coordinate process defined on 
the probability space $(\D,\B(\D), \ppp ^x)$ with path measure
$\ppp  ^x$ starting from $x$ at time $t=0$, ($\ppp$ when $x=0$). Namely $b_t(\omega)=\omega(t)$ for $\omega \in \D$. 
We note that although both of these processes have paths with jump
discontinuities, they have important qualitative differences resulting from the fact that the first has a jump measure which
decays exponentially, while the second decays polynomially at infinity.\\
(Brownian motion)
Let 
$$\X=C(\RR^+,\BR)$$ be the space of $\BR$-valued continuous functions on $\RR^+$. Also, let $\proo B$ be $d$-dimensional
Brownian motion defined on the probability space 
$(\X,\B(\X),\W^x)$, where $\B(\X)$ is the $\s$-field generated by the cylinder sets of $\X$, and
$\W^x$ Wiener measure starting from $x$ at $t=0$, ($\W$ when $x=0$). 
Its infinitesimal generator is $-\hp^0$ as in \eqref{classi}.

It is well-known that all three processes have the strong Markov property with respect to their natural filtrations.
By using these processes we can define the related spaces for the potentials.
\begin{definition}
\label{kato}
{\rm
Let $\proo Z$ be a random process on a probability space  
$(X,{\mathcal  B}(X),m^x)$. 
We say that $V = V_+ - V_-$ is a {Kato-class potential} with respect to $\proo Z$ whenever for
its positive and negative parts
$
V_- \in {\mathcal  K}^Z $ and $V_+  \one_C \in {\mathcal  K}^Z$ 
{for every compact set $C \subset \RR^d$}
hold, where $f \in {\mathcal  K}^Z$ means that
$$
\lim_{t \downarrow 0} \sup_{x \in \RR^d} \Ebb_m^x \left[\int_0^t |f(Z_s)| ds\right] = 0.$$ 
When $\proo Z = \proo B$, we call this space {Kato-class}, and when $\proo Z = \proo b$, we call it {relativistic
Kato-class}.
}
\end{definition}

By Khasminskii's lemma \cite[Lemma 3.37]{LHB} and its straightforward extension to relativistic Kato-class it follows
that the random variables $-\int_0^t V(b_s) ds$ are exponentially integrable for all $t \geq 0$, and thus we can define
the Feynman-Kac semigroup
$$
T_t f(x) = \Ebb_{\ppp}^x\left[e^{-\int_0^t V(b_s) ds} f(b_t)\right]$$ 
Using the Markov property and stochastic continuity of $\proo b$ it can be shown that $\{T_t: t\geq 0\}$ is a strongly
continuous symmetric one-parameter semigroup 
on $L^2(\RR^d)$. The same considerations also hold when the
process is Brownian motion. For all three processes respectively, by the semigroup version of Stone's theorem \cite[Proposition 3.26]{LHB}  
in each case there
exists a self-adjoint operator $K$ bounded from below such that $e^{-tK} = T_t$. Using the generator $K$, we can give a
definition to the operators \eqref{rela}-\eqref{classi} for Kato-class potentials, however, we keep using the notations
$\hp$ for simplicity. In both the non-relativistic and relativistic cases we have then the following Feynman-Kac formulae.

\begin{description}
\item[(Relativistic case)]
Let $\hp$ be given by \eqref{rela} or \eqref{masslos}, and consider the related random process $\proo b$. If $V$ is of
relativistic Kato-class, then
\begin{align}\label{bp}
(f, e^{-t\hp}g)_\LR=\int_{\BR}\Ebb^x_\ppp [\bar{f}(b_0)e^{-\int_0^tV(b_s)ds} g(b_t)] dx.
\end{align}
\item[(Non-relativistic case)]
Let $\hp$ be given by \eqref{classi} and consider Brownian motion $\proo B$. If $V$ 
is of  Kato-class, then
\begin{align}\label{Bp}
(f, e^{-t\hp}g)_\LR=\int_{\BR}\Ebb^x_\W[\bar{f}(B_0)e^{-\int_0^tV(B_s)ds} g(B_t)] dx.
\end{align}
\end{description}
The proof for the relativistic case see \cite[Section 4]{HIL12}, and for the non-relativistic  case we refer to \cite[Sections 3.3, 3.6]{LHB}.
From this  we can derive 
a functional integral representation of 
$(e^{-t\hp}g)(z)$  as 
\begin{align}\label{V}
(e^{-t\hp}g)(x)=\Ebb^x_m[e^{-\int_0^tV(X_s)ds} g(X_t)],
\end{align}
where $X_t=b_t$ and $m^x=\tilde \W^x$ 
for the relativistic case and $X_t=B_t$ and $m^x=\W^x$ for the  non-relativistic case. 

\subsection{$P(\phi)_1$-process associated with $\hp$}
Let 
$$\DD=D(\RR,\BR)$$
be the space of paths on $\RR$ with values in $\BR$,
and $\B(\DD)$ the $\s$-field generated by the cylinder sets of $\DD$. 
Let 
$$\XX=C(\RR,\BR)$$ 
be the space of $\BR$-valued continuous functions on $\RR$
and $\B(\XX)$ the $\s$-field generated by the cylinder sets of $\DD$. 
We will need two-sided processes $\pro b$ and $\pro B$ on 
$(\DD,{\mathcal  B}(\DD),\ppp^x)$ and $(\XX,{\mathcal  B}(\XX),{\W}^x)$, respectively, i.e., indexed by the time-line $\RR$ instead of usually by the semi-axis $\RR^+$. 
The properties
of the so obtained process can be summarized as follows.
See \cite{LO12} for details.
The following hold:
\begin{description}
\item[(i)]
${\ppp  }^x(b_0=x)=1$, $x \in \BR$,
\item[(ii)]
the increments $(b_{t_i}-b_{t_{i-1}})_{1\leq i\leq n}$ are independent symmetric relativistic Cauchy (resp. Cauchy)
random variables for any $0=t_0<t_1<\ldots<t_n$ with $b_t-b_s\stackrel{\rm d}{=}b_{t-s}$ for
$t>s>0$,
\item[(iii)]
the increments $(b_{-t_{i-1}}-b_{-t_i})_{1\leq i\leq n}$ are independent symmetric relativistic Cauchy (resp. Cauchy)
random variables for any $0=-t_0>-t_1>\ldots>-t_n$ with $b_{-t}- b_{-s}\stackrel{\rm d}{=}b_{s-t}$
for $-t>-s>0$,
\item[(iv)]
the function $\RR \ni t\mapsto b_t(\omega)\in \BR$ is c\`adl\`ag for every 
$\omega \in \DD$,
\item[(v)]
the random variables $b_t$ and $b_s$ are independent for $t>0$ and $s<0$.
\end{description}
A completely similar construction can be done to obtain two-sided Brownian motion $\pro B$, with simplifications due to
path continuity.

We give a remark on notations about measures $\ppp^x$ and $\W^x$.
In what follows notation $\ppp^x$ (resp. $\W^x$) 
is used as a probability measure on both measurable spaces 
$(\D, {\mathcal B}(\D))$ and 
$(\DD, {\mathcal B}(\DD))$ (resp. 
$(\X, {\mathcal B}(\X))$ and 
$(\XX, {\mathcal B}(\XX))$) whenever confusion may not arise.

\begin{definition}\TTT{$P(\phi)_1$-process}
\label{matu}
\rm{
Let $(E, \ms F, P)$ be a probability space and $K$ be a self-adjoint operator in 
$L^2(E, dP)$,  bounded from below.
We say that an $E$-valued random  process $\pro Z$ 
on a probability space $({\mathcal  Y}, \B_{\mathcal  Y} , \Q^z)$ is a
{$P(\phi)_1$-process} associated with $((E, \ms F, P), K)$ if conditions (i)-(iv) below are satisfied:
\begin{description}
\item [(i)] (Initial distribution)
$\Q^z(Z_0=z)=1$.
\item[(ii)] ({Reflection symmetry})
$(Z_t)_{t\geq 0}$ and $(Z_t)_{t\leq0}$ are independent and ${Z}_t\stackrel{\rm d}{=}{Z}_{-t}$, $t\in\RR$.
\item [(iii)] ({Markov property})
$({Z}_t)_{t\geq 0}$ and $({Z}_t)_{t\leq0}$ are Markov processes with respect to 
 $\sigma(Z_s,0\leq s\leq t)$ and 
$\sigma(Z_s,t\leq s\leq 0)$, respectively. Here $\sigma(Z_s,0\leq s\leq t)$ (resp. $\sigma(Z_s,t\leq s\leq 0)$) denotes the minimal $\sigma$-field generated by 
$\{Z_s\mid 0\leq s\leq t\}$ (resp. $\{Z_s\mid t\leq s\leq 0\}$).   
\item [(iv)]  ({Shift invariance and functional integration})
Let $-\infty<t_0\leq t_1<\ldots\leq t_n<\infty$, $f_j\in L^{\infty}(E,dP)$, $j=1,\ldots,n-1$, and $f_0,f_n\in L^2(E,dP)$. 
Then for every $s\in\RR$,
\begin{align}
\hspace{-0.5cm}
\int_{E}\Ebb_{\Q^z}\!\!\!
\left[ \prod_{j=0}^{n}f_{j}({Z}_{t_{j}})\right]  dP(z) =
\int_{E}\Ebb_{\Q^z}\!\!\!
\left[ \prod_{j=0}^{n}f_{j}({Z}_{t_{j}+s})\right]  dP(z)
=(\bar f_0, e^{-(t_{1}-t_{0})K }f_1\cdots
f_{n-1}e^{-(t_{n}-t_{n-1})K}f_n).
\end{align}
In particular it follows that for $f,g\in L^2(E,dP)$
\begin{align*}
&(f, e^{-tK }g)_{L^2(E)}=
\int_{E}\Ebb_{\Q^z}\left[\bar f(Z_0)g(Z_t)\right]  dP(z),\\
&
 (e^{-tK }g)(z)=
\Ebb_{\Q^z}\left[g(Z_t)\right].
\end{align*}
\end{description}
}
\end{definition}
 
Denote
\begin{equation}
\label{N}
d{\rm N}(y)=\grp^2(y) dy, \quad y \in \BR.
\end{equation}
Since by part (iv) of Assumption \ref{assumption} the function $\grp$ is square integrable and $L^2$-normalized,
$d{\rm N}$ is a probability measure on $\BR$. Define the unitary operator
\begin{equation}
\label{gstransf}
U_{\rm p}: L^{2}(\BR ,d{\rm N}) \rightarrow L^{2}(\BR,dy), \quad f \mapsto  \grp f.
\end{equation}
Recall that $E_{\rm p} = \inf \Spec(\hp)$. Using that $\grp$ is strictly positive by the same assumption, 
\begin{align}
\label{Lp}
L_{\rm p}=U_{\rm p}^{-1}(\hp -E_{\rm p})U_{\rm p},
\end{align}
is well-defined and  $D(\lp)=
\{f\in L^2(\BR,d{\rm N}) \mid  f\gr\in D(\hp)\}$. Since $e^{-t\lp} \one=\one$ for the identity function $\one\in
L^2(\BR,d{\rm N})$, the operator $\lp$ is the infinitesimal generator of a Markov process.

\medskip
\begin{proposition}
[$P(\phi)_1$-process associated with $\hp$]
\label{pphi}
\begin{description}
\item[\rm (Relativistic case)]
Let $\hp$ be the relativistic Schr\"odinger operator \eqref{rela} or \eqref{masslos}, with $V$ in relativistic
Kato-class, and consider $L_{\rm p}$ defined in \eqref{Lp}. Then
\begin{description}
\item[\rm (i)]
there exists a probability measure $\tilde {\N}^y$ on 
the measurable space $(\DD, \B(\DD))$ such that the coordinate process
$\pro {\tilde Y}$ on the probability space $(\DD, \B(\DD),\tilde {\N}^y)$ is a $P(\phi)_1$-process starting from $y \in \BR$, associated with
$((\BR, \B (\BR), {\rm N}),{L}_{\rm p})$,
\item[\rm (ii)]
the function $t\mapsto \tilde Y_t$ is almost surely c\`adl\`ag.
\end{description}

\item[\rm (Non-relativistic case)]
Let $\hp$ be the non-relativistic  Schr\"odinger operator \eqref{classi}, with $V$ in Kato-class, and consider $L_{\rm p}$
defined in \eqref{Lp}. Then
\begin{description}
\item[\rm (i)]
there exists a probability measure ${\N}^y$ on the measurable space $(\XX, \B(\XX))$ such that the coordinate process $\pro{Y}$
on the probability space $(\XX, \B(\XX),{\N}^y)$ is a $P(\phi)_1$-process starting from $y \in \BR$, associated with 
$((\BR,\B (\BR), {\rm N}), \lp)$, 
\item[\rm (ii)]
the function $t\mapsto {Y}_t$ is almost surely continuous.
\end{description}
\end{description}
\end{proposition}
\proof
We refer to \cite[Theorem 5.1]{LO12} for relativistic case,   
and 
\cite[Theorem 3.106]{LHB} for non-relativistic case.
\qed

Proposition \ref{pphi} yields that 
\begin{align}
&(f, e^{-t\lp }g)_{L^2(\BR,d{\rm N})}=
\int_{\BR}\Ebb_{M^y}\left[\bar f(X_0)g(X_t)\right]  d{\rm N}(y),\\
&\label{pFKF}
(e^{-t\lp }g)(y)
=
\Ebb_{M^y}\left[g(X_t)\right],
\end{align}
where $M=
\tilde \N$ and $X_t=
\tilde Y_t$ for the relativistic case and 
$M=
\N$ and $X_t=
Y_t$ for the non-relativistic case. 
We have functional integrals for $\hp$ and $\lp$ in \eqref{V} and \eqref{pFKF}, respectively.   From these presentations formally we can see that 
there is no Markov process 
which has generator $\hp$ but 
$\lp$ 
is the generator of $P(\phi)_1$-process. 
From now on we drop the tilde-notations and it will be clear from the context if a statement applies
to any of these operators/processes or to a particular operator/process.

\subsection{Infinite dimensional Ornstein-Uhlenbeck process}
Recall $\omega(k)=\sqrt{|k|^2+\nu^2}$. 
Let $\K $ be a Hilbert space over $\RR $, defined by the completion of $D(1/\sqrt\omega)$ with respect to the
norm determined by 
$
(f,g)_{\K }=(1/2)(\hat f/\sqrt\omega, \hat g/\sqrt\omega)_\LR$,
i.e.,
$
\K=\overline{D(1/\sqrt\omega)}^{\|\cdot\|_\K}$. 
Let $T:\K\to\K$ be a positive self-adjoint operator with a Hilbert-Schmidt inverse  
such that $\sqrt{\omega}{T}^{-1}$ is
bounded. Define 
$ \K _{n}=\overline{\cap_{p=1}^\infty D(T^p)}^{\|{T}^{n/2}\cdot \|_{\K}}$. 
Identifying 
$\K ^\ast _{+2}=\K _{-2}$, 
we construct a triplet $\K _{+2}\subset \K\subset\K _{-2}$.
 Write
$Q=\K _{-2}$, and endow $Q$ with its Borel $\s$-field $\B (Q)$.
Let $\xi(f)=\lk \!\lk \xi_{t}, f\rk \!\rk$
for $f\in \K _{+2}$, where $\lk \!\lk .,.\rk \!\rk $ denotes the pairing between $Q$ and 
$\K _{+2}$. 
We can construct a probability measure $\rm G$ on $(Q, \B(Q))$ such that 
\begin{align}\label{eq2}
\Ebb_{\G }[\xi(f)\xi(g)]=(f,g)_{\K }.
\end{align}
Note that by \kak{eq2} every $\xi(f)$ can be uniquely extended to test functions 
$f\in\K $, which for simplicity we will
denote in the same way.
Thus we are led to the probability space $(Q, \B(Q),{\rm G})$. 
 We shall see that $\fff$ is unitary equivalent to $L^2(Q,d\rm G)$ in Section \ref{bus} below. 
Consider the set 
$$\Y =C(\RR , Q)$$ of continuous functions on $\RR$, with values in $Q$, and denote its Borel $\s$-field
by $\mathcal B(\Y)$. 
We can define a {$Q$-valued Ornstein-Uhlenbeck  process} $\pro \xi$,
$
\RR \ni t\mapsto\xi_{t}\in Q,
$
on the probability space 
\begin{align}\label{10}
(\Y ,\B (\Y), \G )
\end{align}
with a probability measure $\G $. 
Let $\xi_{t}(f)=\lk \!\lk \xi_{t}, f\rk \!\rk$
for $f\in \K _{+2}$. 
For every $t \in
\RR$ and $f$ we have that $\xi_{t}(f)$ is a Gaussian random variable with mean zero and covariance
\begin{align}\label{eq1}
\Ebb_{\G }[\xi_{t}(f)\xi_{s}(g)]=(f,e^{-|t-s|\omega}g)_{\K}.
\end{align}
Note that by \kak{eq1} every $\xi_{t}(f)$ can be uniquely extended to test functions $f\in\K $, which for simplicity we will
denote in the same way.
The notation $\G ^{\xi}(\cdot)=\G (\cdot  | \xi_{0}=\xi)$  is a regular conditional probability measure. 

\medskip
\subsection{Functional integral representation for Nelson's Hamiltonian}
\label{bus}
To obtain a path integral representation of the Nelson Hamiltonian, first we need a representation of $\hn $ in
a suitable function space. The quantum mechanical particle  operator $\hp$ (any of the three above) acting on $L^2(\BR)$ can be treated
by a ground state transform as in \eqref{Lp}. To transform the boson field operators, we use the Wiener-It\^{o}-Segal
isomorphism
$$
\WIS :\fff \rightarrow L^{2}(Q, d{{\rm G}}), \quad \WIS  \phi(f)\WIS ^{-1}=\xi(f).
$$
Here $\phi(f)$ is the field operator defined in Remark \ref{NN}. 
In combination with
\eqref{N}, define the product measure
\eq{pzero}
{\rm P}={\rm N}\otimes {{\rm G}},
\en
which is a probability measure on the product space 
$$(\hir,\B_\hir)=(\BR\times Q,\B(\BR)\otimes\B(Q)).$$ 
The map
$
U_{\rm p}\otimes\WIS :\hhh \rightarrow L^{2}(\hir , d{\rm P})
$
establishes a unitary equivalence between $L^{2}(\hir , d{\rm P})$ and $\hhh $, and we make the identification
\eq{identification}
\hhh \cong 
L^{2}(\hir,d{\rm P}).
\en
For convenience, hereafter we write simply as 
$$L^{2}(\hir , d{\rm P})=L^{2}({\rm P}),\quad 
L^{2}(\BR, d{\rm N})=L^{2}({\rm N}),\quad
L^{2}(Q, d{{\rm G}})=L^{2}( {{\rm G}}).$$ 
Thus we have $L^{2}({\rm P})\cong 
L^{2}({\rm N})\otimes L^{2}( {{\rm G}})$.  
The images of the free field
and interaction Hamiltonians on $L^2({\rm P})$ 
under this unitary map are given by
$\hf \mapsto \WIS  \hf \WIS ^{-1}$ and 
$\hi (y) \mapsto \WIS  \hi \WIS ^{-1}(y)=\xi(\tilde{\varphi}(\cdot-y))$,
where $\tilde{\varphi}$ is the inverse Fourier transform of $\vp/\sqrt\omega$. 
Let $\tilde{\hf}=\WIS  \hf \WIS ^{-1}$. 
We have 
$$(\Phi, e^{-t\tilde{\hf}}\Psi)_{L^2({\rm G})}=
\Ebb_{\G}[\Phi(\xi_0)\Psi(\xi_t)]
=\int_Q \Ebb_{\G^\xi}[\Phi(\xi_0)\Psi(\xi_t)]d{\rm G}(\xi).$$

To simplify the notations, we write again
$\hf$ and $\hi$ for the above images. 
The Nelson Hamiltonian $\hn $ is unitary equivalent with
\begin{align}
H= L_{\rm p}\otimes\one +\one \otimes \hf +\hi
\end{align}
acting on $L^{2}({\rm P})$, where $L_{\rm p}$ is given by (\ref{Lp}).

Using the $P(\phi)_{1}$-process $\pro{\tilde Y}$ (resp. $\pro Y$) on 
$(\DD ,\B(\DD), \tilde \N^y)$ (resp. $(\XX ,\B(\XX), \N^y)$) 
associated with 
$((\BR,\B(\BR),{\rm N}), \lp)$ as given in the relativistic case (resp. non-relativistic case) in 
Proposition \ref{pphi}, we write
$$
d\tilde \N=d{\rm N}(y)d\tilde \N^y
\quad (resp. \ \ d \N=d{\rm N}(y)d \N^y).
$$
The probability space for the joint system in the relativistic case (resp. non-relativistic case) 
is then the product space 
$$(\DD\times\Y, \B(\DD)\otimes\B(\Y), \tilde \P_0)\quad (resp.\ \ 
(\XX\times\Y, \B(\XX)\otimes\B(\Y), \P_0)),$$
where 
$$
\tilde \P _{0}=\tilde \N \otimes\G\quad (resp.\ \ 
\P _{0}= \N \otimes\G).
$$
Define the shift operator $\tau_s: L^2({\rm G})\mapsto L^2({\rm G})$ by $\tau_s\xi(h)=\xi(h(\cdot-s))$. We have
then the following functional integral representation for the Nelson Hamiltonian $H$ in $L^2({\rm P})$.
\begin{proposition}
\label{fik}
Let $\Phi,\Psi\in L^{2}({\rm P})$ and  suppose that $s\leq 0\leq  t$. Then
\begin{align*}
&{\rm (relativistic\  case)}\ \ 
(\Phi,e^{-(t-s)H}\Psi)_{L^{2}({\rm P})}= \Ebb_{\tilde \P _{0}}[\Phi(\tilde Y_{s},\xi_{s})e^{-\int_{s}^{t}
\tau_{\tilde Y_{r}}\xi_{r}(\vpp)dr}\Psi(\tilde Y_{t},\xi_{t})],\\
&{\rm (non\!-\!relativistic\  case)}\ \ 
(\Phi,e^{-(t-s)H}\Psi)_{L^{2}({\rm P})}= \Ebb_{\P _{0}}[\Phi(Y_{s},\xi_{s})e^{-\int_{s}^{t}
\tau_{Y_{r}}\xi_{r}(\vpp)dr}\Psi(Y_{t},\xi_{t})].
\end{align*}
\end{proposition}
\proof
For the non-relativistic  Nelson model see \cite[Theorem 6.2]{LHB}. For the relativistic case the proof can be made similarly.
\qed

\subsection{Positivity improving}

For later use we have the following representation formula using 
the specific process
instead of  a $P(\phi)_1$-process.
\begin{proposition}
\label{B}
Let $\Phi,\Psi\in L^{2}({\rm P})$ and $s\leq 0\leq  t$. Then we have for the relativistic cases that
\begin{align*}
(\Phi,e^{-(t-s)H}\Psi)_{L^{2}({\rm P})}= \int_{\hir} \Ebb_{\ppp \otimes{\G}} ^{(y,\xi)}\left[\Phi(b_s,\xi_s)\grp(b_s)
e^{-\int_s^t\tau_{b_{r}}\xi_r(\vpp)dr} \Psi(b_t,\xi_t)\grp(b_t) e^{-\int_s^t V(b_r) dr}\right] dy\otimes d{{\rm G}}.
\end{align*}
The same relation holds for the non-relativistic  case on replacing $\ppp $ by $\W$ and $\pro b$ by $\pro B$,i.e.,
\begin{align*}
(\Phi,e^{-(t-s)H}\Psi)_{L^{2}({\rm P})}= \int_{\hir} \Ebb_{\W \otimes{\G}} ^{(y,\xi)}\left[\Phi(B_s,\xi_s)\grp(B_s)
e^{-\int_s^t\tau_{B_{r}}\xi_r(\vpp)dr} \Psi(B_t,\xi_t)\grp(B_t) e^{-\int_s^t V(B_r) dr}\right] dy\otimes d{{\rm G}}.
\end{align*}
\end{proposition}
\proof
For the non-relativistic  Nelson model see \cite[Theorem 6.3]{LHB}. For the relativistic case a similar proof can be applied,
the details are left to the reader.
\qed

We give the definition of 
positivity preserving operators and positivity improving operators below.
\begin{definition}
Let $(E,\nu)$ be a $\sigma$-finite measure space. 
Let $S$ be a bounded  operator on $L^2(E,d\nu)$. 
\bi
\item[(i)]
$S$ is positivity preserving if and only if 
$(f, Sg)_{L^2(E)}\geq 0$ for any $f\geq$ and $g\geq0$.
\item[(ii)]
$S$ is positivity improving if and only if 
$(f, Sg)_{L^2(E)}>0$ for any $f\geq$ and $g\geq0$ but $f\not\equiv 0$ and $g\not\equiv 0$.
\ei
\end{definition}
Form the definition we can see that when $S$ is positivity improving, 
$(Sf)(x)>0$ for a.e. $x\in E$  for any non-negative function $f\geq0$ $(f\not\equiv 0$). 
Furthermore in the case of $S=e^{-tK}$ with some self-adjoint operator bounded  below, 
we have a useful proposition below. 
\begin{proposition}
\label{den5}
Let $(E,\nu)$ be a $\sigma$-finite measure space. 
Let $K$ be a self-adjoint operator that is bounded below on $L^2(E,d\nu)$. 
Suppose that $e^{-tK}$ is positivity preserving for all $t>0$ and $\is(K)$ is an eigenvalue.
Then the following two statements are equivalent 
\bi
\item[\rm (i)] $\is(K)$ is a simple eigenvalue with a strictly positive eigenfunction. 
\item[\rm (ii)] $e^{-tL}$ is positivity improving for all $t>0$. 
\ei
\end{proposition}
\proof
See e.g., \cite[XIII.12]{rs4}.  
\qed
\begin{remark}\label{PI}
By (iv) of  Assumption \ref{assumption} and Proposition \ref{den5} 
it is seen that $e^{-t\hn}$ is positivity improving. 
We also give a sufficient condition for the positivity improving property
of $e^{-t\hn}$ in Lemma \ref{unique} below. 
\end{remark}

\begin{lemma}
\label{unique}
Let $X_t=b_t$ and $m=\tilde {\W}$ 
for the relativistic case and $X_t=B_t$ and $m=\W$ for the  non-relativistic case. 
Suppose that there exists $t>0$ such that
$$
\int_\BR m^x\lk  e^{-\int_0^t V(X_s) ds}=0\rk  dx =0.$$
Then $e^{-t\hn}$ is positivity improving. 
In particular, for $V_+\in L_{\rm loc}^1(\BR)$, if $\hn$ has a
ground state, then it is unique.
\end{lemma}
\proof
It is sufficient to show that 
$(\Phi, e^{-t\hn}\Psi)>0$ for any $\Phi\geq0$ and $\Psi\geq0$ but $\Phi\not\equiv0\not\equiv \Psi$. 
By \cite[Theorem 3.55]{LHB}, $e^{-t\hp}$ is positivity improving. 
It is also established that $e^{-t\hf}$ is  positivity improving as well. 
Together with them it can be seen that 
$e^{-t(\hp^0\otimes\one+\one\otimes\hf)}$ is also 
positivity improving. We have 
\begin{align*}
(\Phi,e^{-t\hn}\Psi)_{L^{2}({\rm P})}
= \int_{\hir} \Ebb_{m \otimes{\G}} ^{(y,\xi)}\left[\Phi(X_0,\xi_0)
e^{-\int_0^t\tau_{X_{r}}\xi_r(\vpp)dr} \Psi(X_t,\xi_t) e^{-\int_0^t V(X_r) dr}\right] dy\otimes d{{\rm G}}\geq0.
\end{align*}
Suppose that $(\Phi,e^{-t\hn}\Psi)_{L^{2}({\rm P})}=0$. 
We see that 
$\| \int_0^t\tau_{X_{r}}\xi_r(\vpp)dr\|_{L^2(Q)}\leq t
(\|\vp/\omega\|+\|\vp/\sqrt\omega\|)$ and 
the measure of the sets  
\begin{align*}
&{\rm (relativistic\ case)} \quad \left\{(y,\omega,\xi ) \in \BR\times \tilde {\ms X}\times Q 
\left|  |
\int_0^t\tau_{y+X_{r}(\omega)}\xi_r(\vpp)dr |=\infty\right.\right\},\\
&{\rm (non\!-\!relativistic\ case)} \quad \left\{(y,\omega,\xi ) \in \BR\times {\ms X}\times Q \left|  
|
\int_0^t\tau_{y+X_{r}(\omega)}\xi_r(\vpp)dr |=\infty\right.\right\}
\end{align*}
are  zero. 
Hence  $(\Phi,e^{-t\hn}\Psi)_{L^{2}({\rm P})}=0$
implies that 
$$\int_{\hir} \Ebb_{m \otimes{\G}} ^{(y,\xi)}\left[\Phi(X_0,\xi_0)
 \Psi(X_t,\xi_t) \right] dy\otimes d{{\rm G}}=0$$ 
 by the assumption,
but this contradicts that 
$$0<(\Phi, 
e^{-t(\hp^0\otimes\one+\one\otimes\hf)}\Psi)= \int_{\hir} \Ebb_{m \otimes{\G}} ^{(y,\xi)}\left[\Phi(X_0,\xi_0)
 \Psi(X_t,\xi_t) \right] dy\otimes d{{\rm G}}.$$ 
We conclude that $(\Phi,e^{-t\hn}\Psi)_{L^{2}({\rm P})}
>0$ and the lemma follows. 
\qed

\section{$P(\phi)_{1}$-processes associated with the Nelson Hamiltonians}\label{sec}
\subsection{$P(\phi)_{1}$-process associated with the relativistic Nelson model}
In this section we construct $P(\phi)_1$-processes associated with the relativistic and non-relativistic  Nelson models. The two
cases differ significantly by the sample path regularity (and thus also support) properties, requiring different
treatment. First we consider the relativistic Nelson model for the particle operators \eqref{rela} and \eqref{masslos},
thus the process $\pro b$ below will mean either a relativistic Cauchy or a Cauchy process.

Since the ground state $\gr$ of $H$ is strictly positive and $L^2$-normalized, we can define the probability
measure
\begin{equation}
\label{rmm}
d{\rm M}=\gr  ^{2}d {\rm P},
\end{equation}
on $\hir$. Also, consider the unitary operator
$$
{U}_{\rm g} :L^{2}(\hir, d{\rm M})\rightarrow L^{2}(\hir , d {\rm P}), \quad \Phi \mapsto  \gr  \Phi,
$$
and denote 
$$\LM =L^{2}(\hir, d{\rm M}).$$
Recall that $E = \inf \Spec({H}_{\rm N})$.  
We define the ground
state-transformed  operator by
\begin{equation}
\label{nelgst}
{L}_{\rm N}={U}_{\rm g}^{-1}( H - E){U}_{\rm g}
\end{equation}
with the domain 
$$D(\LN)=\{f\in \LM  \mid f\gr\in D(H)\}=\{g/\gr \mid g\in D(H)\}.$$
It is immediate to see the lemma below.
\begin{lemma}\label{denden}
 $\LN$ is self-adjoint and 
  $e^{-t\LN}$ is positivity improving for $t>0$.
In particular $\LN f$ is real when $f\in D(\LN)$ is real. 
\end{lemma}
\proof
Since $\gr>0$ we see that 
 ${\rm Ker}(\LN\pm i\one)={\rm Ker}(\gr^{-1}(H-E+\pm i\one)\gr)=
{\rm Ker}(H-E+\pm i\one)=\{0\}$. This implies $\LN$ is self-adjoint.
Let $\Psi,\Phi\in \LM $ and 
$\Psi\geq0$ and $\Phi\geq0$ but $\Phi\not\equiv0\not\equiv \Psi$.
We see that 
$$(\Psi, e^{-t\LN}\Phi)_{ \LM }=
(\gr \Psi, e^{-t(H-E)}\gr \Phi)_{ L^2({\rm P})}>0$$ since 
$\gr \Psi\geq0$ and $\gr \Phi\geq0$. Thus 
  $e^{-t\LN}$ is positivity improving for $t>0$. 
Let $f\in D(\LN)$ is real. 
  We see that $e^{-t\LN} f $ is also real. Since $\LN f$ equals  the derivative of $e^{-t\LN} f$ at $t=0$, 
  $\LN f$ is also real. 
  \qed

Let $$\DD_\hir =D(\RR,\hir)$$ be the space of c\`adl\`ag paths indexed by $\RR$, with values in $\hir$, and $ \tilde \C_\hir$
the $\sigma$-field generated by cylinder sets. The following is our first main result.
\bt{pp11}
\TTT{$P(\phi)_1$-process associated with the relativistic Nelson Hamiltonian}
\label{relnel}
Let $(y,\xi)\in \hir$. The following hold.
\begin{description}
\item[\rm (i)]
There exists  a probability measure $\tilde \P^{(y,\xi)}$ on the measurable space $( \DD_\hir , \tilde \C_\hir)$ such that the $\hir$-valued 
coordinate process $\pro {\tilde X}$ on $( \DD_\hir , \tilde \C_\hir,\tilde \P^{(y,\xi)})$ is a $P(\phi)_1$-process associated with 
$((\hir, \B_\hir , {\rm M}), {L}_{\rm N}).$ In particular 
\begin{align}
&(\Psi, e^{-t \LN }\Phi)_{L^2(\rm P)}=\int_\hir \Ebb_{\tilde \P}^{(y,\xi)}[\Psi(\tilde X_0)\Phi(\tilde X_t)]d{\rm M},\\
&(e^{-t \LN }\Phi)(y,\xi)=\Ebb_{\tilde \P}^{(y,\xi)}[\Phi(\tilde X_t)].
\end{align}
\item[\rm (ii)]
The function $t\mapsto \tilde X_t$ has a c\`adl\`ag version.
\end{description}
\et

In order to  prove this theorem we need a string of lemmas below. 
Let $\B_\hir$ be the family of Borel  measurable sets of $\M$. 
Define the family of set functions $\{\M  _{\Lambda} \mid \Lambda\subset[0,\infty),   \#\Lambda <\infty\}$ on the $\#\Lambda$-fold product of $\B_\hir$: 
$\B_\hir ^{\#\Lambda}=\underbrace { \B_\hir  \times\cdots\times \B_\hir }_{\#\Lambda-\mbox{\tiny{times}}}$ 
by
\begin{align}
\M  _{\Lambda}(A_{0}\times A_{1}\times\cdots \times A_{n})
=\lk  \one _{A_{0}},
e^{-(t_{1}-t_{0})\LN }\one _{A_{1}}
\cdots\one _{A_{n-1}}e^{-(t_{n}-t_{n-1})\LN }\one _{A_{n}}  \rk_{\LM}
\label{mmeasure}
\end{align}
for $\Lambda=\{t_0,\ldots,t_n\}$. It is straightforward to show that the family of set functions
$\M_{\Lambda}$ satisfies the Kolmogorov consistency relation
$$
\M  _{\{t_{0},t_{1},\ldots,t_{n+m}\}}((\times_{i=0}^{n}A_{i})\times (\times_{i=n+1}^{n+m}\hir ))=
\M  _{\{t_{0},t_{1},\ldots,t_{n}\}} (\times_{i=0}^{n}A_{i}), \quad m, n \in \mathbb N.
$$
Define the projection 
$$\pi_{\Lambda}: \hir ^{[0,\infty)}\rightarrow  \hir ^{\Lambda}$$ by $w \mapsto
(w(t_{0}),\ldots,w(t_{n}))$ for $\Lambda=\{t_{0},\ldots,t_{n}\}$. 
Then
$
{\mathcal  A} = 
\{\pi_{\Lambda}\f(A) \mid  A\in \B_\hir ^{\#\Lambda},  \#\Lambda<\infty \}
$
is a finitely additive family of sets, and  the Kolmogorov extension theorem 
yields
that there exists a unique probability measure $\M$ on $(\hir ^{[0,\infty)},\sigma({\mathcal  A} ))$ such that
$
\M  (\pi_{\Lambda}^{-1}(A_{1}\times\cdots\times A_{n})) = \M  _{\Lambda}(A_{1}\times\cdots\times A_{n})$ 
for all $\Lambda\subset[0,\infty)$ with $\#\Lambda<\infty$ and $A_{j}\in \B_\hir $, and
\begin{align}
\label{p2}
\M _{\{t_0,\ldots,t_n\}}(A_0\times\cdots\times A_n)=\Ebb_\M \left[\prod_{j=0}^n \one_{A_{j}}(Z_{t_j})\right]
\end{align}
holds. Here $(Z_t)_{t\geq0}$ is an $\hir$-valued coordinate process defined by $Z_t(\omega)=\omega(t)$ for $\omega \in \hir ^{[0,\infty)}$.
From \eqref{p2} the following result is immediate.

\begin{lemma}\label{shift2}
The random  process $({Z}_t)_{t\geqslant0}$ is time-shift invariant under ${\M}$, i.e., for every $f_0,\ldots,f_n
\in {\K} $ and $s\geqslant0$ it follows that
\begin{align}\label{shi2}
\Ebb_{{\M}}\left[ \prod_{j=0}^{n}f_{j}({Z}_{t_{j}+s})\right] =\Ebb_{{\M}}
\left[ \prod_{j=0}^{n}f_{j}({Z}_{t_{j}})\right] =\lk  \bar f_{0},
e^{-(t_{1}-t_0)
{L}_{\rm N}}f_{1}\cdots f_{n-1}e^{-(t_n-t_{n-1}){L}_{\rm N}}f_{n}\rk _{{\K} }.
\end{align}
\end{lemma}

Now we prove that $({Z}_t)_{t\geqslant0}$ has a c\`adl\`ag version under ${\M}$. For this purpose, we need the
following technical lemmas which make use of the general ideas in \cite[pp 59-62]{Sato}. 
Let $I\subset[0,\infty)$
and $\varepsilon>0$. We say that ${Z}_\cdot(\omega)$, with fixed $\omega$, has $\varepsilon$-oscillation $n$ times
in $I$ if there exist  $t_0,t_1,\ldots,t_n\in I $ such that $t_0<t_1<t_2<\ldots<t_n$ and
$\|{Z}_{t_{j}}-{Z}_{t_{j-1}}\|_{\hir}>\varepsilon$ for $j=1,\ldots,n$. We say that ${Z}_\cdot(\omega)$ has
$\varepsilon$-oscillation infinitely often in $I$, if for every $n$, ${Z}_\cdot(\omega)$ has $\varepsilon$-oscillation
$n$ times in $I$. Let
\begin{align*}
&\Omega'=\{   \omega \in {\mathbb  M}^{[0,\infty)} \mid  \lim_{s\in {\mathbb  Q}, s \downarrow t}   {Z}_s(\omega)
\mbox{ and } \lim_{s\in {\mathbb  Q}, s \uparrow t}   {Z}_s(\omega) \mbox{ exist in } \hir \mbox{ for all }  t\geq 0 \},\\
&
A_{N,k}=\lkk  \omega \in {\mathbb  M}^{[0,\infty)} \mid  {Z}_t(\omega)  \mbox{ does not have } \frac{1}{k}-\mbox{oscillation infinitely often in }
 [0,  N] \cap {\mathbb  Q}    \rkk,\\
&\Omega''=\cap_{N=1}^{\infty}\cap_{k=1}^{\infty}A_{N,k}.
\end{align*}
Similarly as in \cite[Lemma ~11.2]{Sato}, it is seen that $\Omega''\subset \Omega'$. 
Define the event
$$
U(p,\varepsilon, I)=\{
 \omega\in \hirr \mid  {Z}_t(\omega) \mbox{ has } \varepsilon-\mbox{oscillation }
p  \mbox{ times in } I \}.
$$

\begin{lemma}\label{11.2}
For every  $\varepsilon>0$ we have 
$\underset{|t-s|\rightarrow  0} \lim {\M} \lk \| {Z}_t- { Z}_s \|_{\hir}
>\varepsilon \rk=0$ and $ {\M}(\Omega'')=~1$.
\end{lemma}
\proof
We assume that $|t-s|<1$ in this proof. 
Since $b_t$ and $\xi_t$ are translation invariant under $\M$, 
we have
$$
{\M}\lk \|  { Z}_t-  {Z}_s\|_{\hir}>\varepsilon  \rk 
=
{\M}\lk \|  { Z}_{t-s}-  {Z}_0\|_{\hir}>\varepsilon  \rk 
=(\one,  e^{-(t-s){L}_{\rm N}}\one_{B^c(\eps)})_{L^2(M)},
$$
where $B^c(\eps)=
\{ \|b_{t-s}-b_0\|_\BR^2+\|\xi_{t-s}-\xi_0\|_Q^2>\varepsilon^2\} $. Also, we set 
\begin{align*}
&K_2=\underset{y\in \BR} \sup\|\gr   (y,\cdot)\|_{L^2(\rm G)},\\
&C_V=\sup_{t\in[0,1]}\underset{y\in \BR}\sup\Ebb\left[
\exp\lk -2\int_0^t V(b_r) dr\rk \right],\\
&C_\vp=\sqrt 2 \exp(5\|\vp/\omega\|^2+4\|\vp/\sqrt\omega\|^2).
\end{align*}
We notice that $\grp \geq0$ and $\gr  \geq0$. 
Since 
\begin{align*}
\grp(y)&=
\Ebb^y[e^{-\int_0^t V(b_r) dr} \grp(b_t)]\leq
\Ebb^y[e^{-2\int_0^t V(b_r) dr}]^\han
\Ebb^y[ \grp(b_t)^2]^\han \\
&\leq C_V^\han 
\lk
\int_\BR |\grp(x+y)|^2 P_t(x) dx\rk^\han,
\end{align*}
where 
the distribution of $b_t$ on $\BR$ is given by $$P_t(x)=\frac{1}{(2\pi)^d} \frac{t}{\sqrt{|x|^2+t^2}}\int_\BR  e^{mt} e^{-\sqrt{(|x|^2+t^2)(|p|^2+m^2)}}dp.$$
We notice that 
$$P_t(x)=\lkk
\begin{array}{ll}<\infty &x=0\\
\leq |x|^{-(d+1)}(2\pi)^{-d}t e^{mt}\Gamma(d)&|x|>0
\end{array}
\right.$$
and $\|P_t\|_\infty<\infty$. 
Hence 
$\|\grp\|_\infty<\|P_t\|_\infty^\han  C_V^\han \|\grp\|_\LR<\infty$. 
We have
\begin{align*}
 {\M}\lk \|   {Z}_{t-s}-  {Z}_0\|_{\hir}>\varepsilon  \rk
&=
\int_{\hir }dy d{{\rm G}} \Ebb_{{\cal W}\times \G}^{(y,\xi)}
\left[      \grp (b_0)  \grp (b_{t-s}) e^{-\int_0^{t-s} V(\xx_r) dr}\right.\\
&\hspace{2cm}\left.
\gr  (b_0,\xi_0)  
\gr  (b_{t-s},\xi_{t-s})e^{-\int_0^{t-s} \tau_{\xx_r}\xi_r(\vpp)dr}
 \one_{B^c(\eps)}  \right]  e^{(t-s) E}.
 \end{align*}
We shall show that the right-hand side above converges to zero as $t-s\to 0$.
We reset $t-s$ by $t$, and we denote the integral with respect to $d{\rm G}d\G^\xi$  by 
\begin{align*}
X_t&=
\int_Q d{\rm G}
\Ebb_{\G}^{\xi}\left[
\gr  (b_0,\xi_0)
\gr  (b_t,\xi_t)
e^{-\int_0^{t} \tau_{\xx_r}\xi_r(\vpp)dr} 
 \one_{B^c(\eps)} \right]\\
&=
\Ebb_{\G}\left[
\gr  (b_0,\xi_0)
\gr  (b_t,\xi_t)
e^{-\int_0^{t} \tau_{\xx_r}\xi_r(\vpp)dr} 
 \one_{B^c(\eps)} \right]
  \end{align*}
for the notational simplicity. 
We also simply write 
$\Ebb^y$ for $\Ebb_{{\cal W}}^y$ and 
$\|\cdot\|_{L^p({\G})}=\|\cdot\|_p$ in this proof.  
Hence 
\begin{align*}
{\M}\lk \|   {Z}_{t}-  {Z}_0\|_{\hir}>\varepsilon  \rk
=
\int_{\BR}dy \grp (y)
\Ebb^y
\left[\grp (b_{t}) e^{-\int_0^{t} V(\xx_r) dr}X_t\right]e^{tE}.
 \end{align*}
Note that 
\begin{align*}
X_t&\leq 
\Ebb_\G[\gr  (b_0,\xi_0)\gr  (b_t,\xi_t)e^{-2\int_0^{t} \tau_{\xx_r}\xi_r(\vpp)dr} ]^\han 
\Ebb_\G[\gr  (b_0,\xi_0)\gr  (b_t,\xi_t)\one_{B^c(\eps)} ]^\han.
\end{align*}
Since 
\begin{align*}
\Ebb_\G[\gr  (b_0,\xi_0)\gr  (b_t,\xi_t)e^{-2\int_0^{t} \tau_{\xx_r}\xi_r(\vpp)dr} ]\leq 
\|\gr  (b_0)\|_2
\|\gr  (b_t)\|_2
e^{4t\gamma}
\leq 
2 K_2^2 
e^{4\gamma}
\end{align*}
Here the first  inequality follows from 
the fact $t<1$, 
the Baker-Campbel-Hausdorff formula
\cite[Proposition 1.29]{HL20}  and 
\cite[Corollary 1.88 (1.5.21)]{HL20}
with 
\begin{align}
\label{gamma}
\gamma=
\frac{5}{2}\|\vp/\omega\|^2+2\|\vp/\sqrt\omega\|^2.
\end{align}
Hence 
$$X_t\leq C_\vp K_2  A^\han,$$
where
$$A= 
\Ebb_\G[\gr  (b_0,\xi_0)\gr  (b_t,\xi_t)\one_{B^c(\eps)} ].
$$
We have 
\begin{align*}
&
\int_{\BR}dy \grp (y)
\Ebb^y[\grp (b_{t}) e^{-\int_0^{t} V(\xx_r) dr}X_t]\\
&\leq
C_\vp K_2 
\int_{\BR}dy \grp (y)
\Ebb^y
[\grp (b_{t}) e^{-\int_0^{t} V(\xx_r) dr} A^\han]\\
&\leq
C_\vp K_2 
\int_{\BR}dy \grp (y)
\Ebb^y[e^{-2\int_0^{t} V(\xx_r) dr}]^\han
\Ebb^y[\grp(b_t)^2 A]^\han\\
&\leq
C_V^\han C_\vp K_2 
\int_{\BR}dy \grp (y)
\Ebb^y[\grp(b_t)^2 A]^\han\\
&\leq
C_V^\han C_\vp K_2 
\int_{\BR}dy \grp (y)
\Ebb^y[\grp(b_t)^2 \|\gr(b_t,\xi_t)\one_{B^c(\eps)}\|_2\|\gr(b_0,\xi_0)\|_2]^\han\\
&\leq
C_V^\han C_\vp K_2^{3/2} 
\int_{\BR}dy \grp (y)
\Ebb^y[\grp(b_t)^2 \|\gr(b_t,\xi_t)\one_{B^c(\eps)}\|_2]^\han\\
&\leq
C_V^\han C_\vp K_2^{3/2} 
\|\grp\|
\lk
\int_{\BR}dy 
\Ebb^y[\grp(b_t)^2 \|\gr(b_t,\xi_t)\one_{B^c(\eps)}\|_2]\rk^\han.
\end{align*}
By the shift invariance (see \cite[Proposition 3.40]{LHB20}) and the reflection symmetry of $b_t$ and $\xi_t$, 
we have 
\begin{align*}
&C_V^\han C_\vp K_2^{3/2} 
\|\grp\|
\lk
\int_{\BR}dy 
\Ebb^y[\grp(b_t)^2 \|\gr(b_t,\xi_t)\one_{B^c(\eps)}\|_2]\rk^\han\\
&=C_V^\han C_\vp K_2^{3/2} 
\|\grp\|
\lk
\int_{\BR}dy 
\Ebb^y[\grp(b_0)^2 \|\gr(b_0,\xi_0)\one_{B^c(\eps)}\|_2]\rk^\han\\
&=C_V^\han C_\vp K_2^{3/2} 
\|\grp\|
\lk
\int_{\BR}dy 
\grp(y)^2 \Ebb^y[\|\gr(b_0,\xi_0)\one_{B^c(\eps)}\|_2]\rk^\han.
\end{align*}
By Schwartz inequality we have
\begin{align*}
&
\leq 
C_V^\han C_\vp K_2^{3/2} 
\|\grp\|
\|\grp\|^\han
\lk\int_{\BR}dy 
\grp(y)^2 \Ebb^y[\|\gr(b_0,\xi_0)\one_{B^c(\eps)}\|_2^2]\rk^{1/4}\\
&=
C_V^\han C_\vp K_2^{3/2} 
\|\grp\|^{3/2}
{\mathfrak M}(B^c(\eps))^{1/4}.
\end{align*}
Here 
\begin{align*}
{{\mathfrak M}}(
B^c(\eps))
=\int_{\mathbb M}\Ebb_{\W}^y \Ebb_\G^\xi[\one_{B^c(\eps)}] dm
\end{align*}
and 
$$dm=\grp(y)^2 \gr(y,\xi)^2 dy d{\rm G}.$$
Since $\|\grp\|_\infty<\infty$, 
$m$ is a finite measure.
Hence we have 
\begin{align}
\label{H5}
\M(\|Z_t-Z_0\|_{\mathbb M}>\eps)\leq
C_V^\han C_\vp K_2^{3/2}\|\grp\|^{3/2}e^{tE}
{\mathfrak M}(B^c(\eps))^{1/4}.
\end{align}
By the stochastic continuity of $b_t$
we can see that 
\begin{align*}
\Ebb^y
[B^c(\eps)]
=
\Ebb^y[\{\sqrt{
\|b_t-b_0\|_\BR^2+\|\xi_t-\xi_0\|_Q^2}>\eps\}]
\to 0
\end{align*}
as $t\to 0$.
The Lebesgue dominated convergence theorem yields that 
\begin{align}\label{H4}
\lim_{t\to0}
{{\mathfrak M}}(
B^c(\eps))
= 0.
\end{align}
By \kak{H5} it follows that 
$$
\lim_{t\to0}
\M(\|Z_t-Z_0\|_{\mathbb M}>\eps)=0.
$$
To see that $\M (\Omega'')$=1, it suffices to show that $\M (A_{N,k}^c)=0$ for any fixed $N$ and $k$. We have
\begin{align*}
&\M (A_{N,k}^c)
=\M \lk\lkk  {Z}_t  \mbox{ has }\frac{1}{k}-\mbox{oscillation infinitely often in } \big [0,  N\big] \cap {{\mathbb  Q}}    \rkk \rk \\
& \leq  \sum_{j=1}^{l}  \M  \lk \lkk  {Z}_t\mbox{ has
 }\frac{1}{k}-\mbox{oscillation infinitely often in }
\bigg [\frac{j-1}{l} N, \frac{j}{l}N\bigg] \cap {\mathbb  Q}    \rkk \rk \\
&=\sum_{j=1}^{l}\underset{p\rightarrow  \infty}\lim \M  \lk U
\lk p, \frac{1}{k},\bigg [\frac{j-1}{l} N, \frac{j}{l}N\bigg]    \cap {\mathbb  Q}\rk
  \rk .
\end{align*}
We enumerate as $\{t_1,\ldots,t_n,\ldots\} = [\frac{j-1}{l} N, \frac{j}{l}N]    \cap {\mathbb  Q} $.
Fix $p$. 
Thus
$$
\M  \lk U
\lk p, \frac{1}{k},\bigg [\frac{j-1}{l} N, \frac{j}{l}N\bigg]    \cap {\mathbb  Q}\rk
  \rk
=\lim_{n\to\infty}\M   \lk 
U \lk p, \frac{1}{k},  \{t_1,\cdots,t_n\}\rk  \rk .$$
Since 
$$U_n=U \lk p, \frac{1}{k},  \{t_1,\cdots,t_n\}\rk=
\bigcup_{\{j_1,\ldots,j_p\}\subset \{1,\ldots,n\}\atop
j_1<\ldots<j_p}
\{\omega\in{\mathbb M}\ ; \ 
\|Z_{t_{j_i}}-Z_{t_{j_i-1}}\|>\frac{1}{k}, i=1,\ldots,p\}$$
Then by  the translation invariance  
of $b_t$ and $\xi_t$ under $\M$ we can assume that $0\leq t_1,\ldots,t_n
\leq N/l$, 
we obtain
\begin{align*}
&\M   (U_n)\\
&= e^{N E/l}
\int_{\hir} dyd{\rm G}\Ebb^{y,\xi}_{\W\times\G}
\left[\grp(b_0)\grp(\xx_{N/l})
e^{-\int_0^{N/l} V(\xx_r) dr}
\gr(b_0,\xi_0)     
\gr(\xx_{N/l},\xi_{N/l})    
e^{-\int_0^{N/l} \tau_{\xx_r}\xi_r(\vpp)dr}
  \one_{U_n}\right].
  \end{align*}
Hence in the same estimate preceding \kak{H4} we have
\begin{align}
\label{darui2}
\M (U_n)
\leq 
C_V^\han C_\vp K_2^{3/2} 
\|\grp\|^{3/2}
{\mathfrak M}(U_n)^{1/4}
e^{(N/l)  E}.
\end{align}
By \cite[Lemma 11.4]{Sato}, furthermore we have
\eq{kore}
{\mathfrak M}(U_n)
\leq
\lk
\sup_{s,t\in[0,N]\atop t-s\in[0,N/l]}
{\mathfrak M}
\left (\sqrt{\|b_t-b_s\|_\BR^2+\|\xi_t-\xi_s\|^2_Q}>\frac{1}{4k} \right)\rk ^p.
\en
Moreover, by stochastic continuity of $((\xx_t,\xi_t))_{t\geq  0}$ under $\M$,
we can furthermore prove uniform stochastic continuity, i.e.,
\eq{us}
\sup_{s,t\in[0,N]\atop
t-s\in[0,N/l]}{\mathfrak M}\left (\sqrt{\|b_t-b_s\|_\BR^2+\|\xi_t-\xi_s\|^2_Q}\geq\frac{1}{4k} \right)\to 0
\en
as
$l\to\infty$  in Lemma \ref{satobook} below.
Then for arbitrary $\eps>0$, 
${\mathfrak M}(U_n)\leq \eps^p$ for some large $l$. 
Hence 
\begin{align*}
\M (A_{N,k}^c)
\leq
C_V^\han C_\vp K_2^{3/4} 
\|\grp\|^{3/2}
\sum_{j=1}^{l}\underset{p\rightarrow  \infty}\lim 
 \eps^{p/4}
=0.
\end{align*}
Then the proof is complete. 
\qed

\begin{lemma}\label{satobook}
 \kak{us} holds.
\end{lemma}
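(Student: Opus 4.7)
The plan is to reduce the two-variable supremum over $\{(s,t)\in[0,N]^2:t-s\in[0,N/l]\}$ to a one-variable supremum in $u=t-s$ by stationarity, split the joint deviation into its $\BR$- and $Q$-components via a union bound, and then invoke stochastic continuity of each component at $u=0$.

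First, since $(\xx_t)_{t\in \RR}$ has stationary increments one has $\xx_t-\xx_s\stackrel{\rm d}{=}\xx_{t-s}-\xx_0$ under $\mathscr P$, and by \kak{eq1} the process $(\xi_t)_{t\in\RR}$ is stationary under $\G$, so $(\xi_t,\xi_s)\stackrel{\rm d}{=}(\xi_{t-s},\xi_0)$. Because $\xx$ and $\xi$ are independent factors in $\mathscr P\otimes \G$, the joint law of $(\xx_t-\xx_s,\xi_t-\xi_s)$ depends on $(s,t)$ only through $u=t-s$. Combined with the elementary inclusion
$\{\|\xx_u-\xx_0\|_\BR^2+\|\xi_u-\xi_0\|_Q^2\geq (4k)^{-2}\}\subset \{\|\xx_u-\xx_0\|_\BR\geq (4k\sqrt 2)^{-1}\}\cup\{\|\xi_u-\xi_0\|_Q\geq (4k\sqrt 2)^{-1}\}$,
this yields
\begin{align*}
&\sup_{\substack{s,t\in[0,N]\\ t-s\in[0,N/l]}}(\mathscr P\otimes\G)\lk |(\xx_s,\xi_s)-(\xx_t,\xi_t)|\geq \tfrac{1}{4k}\rk\\
&\qquad\leq\; \sup_{u\in[0,N/l]}\mathscr P\lk \|\xx_u-\xx_0\|_\BR\geq \tfrac{1}{4k\sqrt 2}\rk+\sup_{u\in[0,N/l]}\G\lk \|\xi_u-\xi_0\|_Q\geq \tfrac{1}{4k\sqrt 2}\rk.
\end{align*}

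Second, for the $\xx$-term I would quote the standard fact that a L\'evy process is stochastically continuous at $0$, so $\lim_{u\downarrow 0}\mathscr P(\|\xx_u-\xx_0\|_\BR\geq \eps)=0$ for every $\eps>0$. For the $\xi$-term I would use that the $Q$-valued OU process has $\G$-a.s.\ continuous sample paths (the Hilbert-Schmidt assumption on $T^{-1}$ in the construction of the triplet $\K_{+2}\subset \K\subset \K_{-2}$ is precisely what guarantees continuity in the norm of $Q=\K_{-2}$), so $\|\xi_u-\xi_0\|_Q\to 0$ a.s.\ and hence in probability as $u\downarrow 0$. Each of the two maps on the right therefore vanishes at $u=0$ and is right-continuous there, so given any $\delta>0$ one can pick $\eta>0$ such that each sup over $[0,\eta]$ is less than $\delta/2$; taking $l$ large enough that $N/l\leq\eta$ makes the total bound less than $\delta$, proving \kak{us}.

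The only mild subtlety is upgrading pointwise stochastic continuity at $u=0$ to the vanishing of the sup over $[0,N/l]$; but since we work on a one-sided interval including the point of continuity, the $\eps$--$\delta$ characterization of the limit automatically supplies the uniform statement on small intervals, so no serious extra argument is required.
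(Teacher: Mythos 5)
Your proof is correct, but it follows a genuinely different route from the paper's. The paper proves \kak{us} by the classical compactness argument: invoke stochastic continuity of $(b_t,\xi_t)$ at \emph{every} $t\in[0,N]$ to get locally uniform continuity, then cover $[0,N]$ by finitely many of the resulting intervals to obtain a global modulus (this is essentially \cite[Lemma 11.4]{Sato} and its surroundings). Your argument instead exploits the special structure of the process: the L\'evy process $b$ has stationary increments and, by \kak{eq1}, $\xi$ is a stationary Gaussian process, so the probability in \kak{us} depends on $(s,t)$ only through $u=t-s$; after a union bound it suffices to know stochastic continuity of each factor at $u=0$, and the supremum over $[0,N/l]$ then tends to $0$ directly from the $\varepsilon$--$\delta$ definition of the limit. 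Both proofs are valid. What the paper's compactness argument buys is generality — it works for any $\hir$-valued stochastically continuous process on a compact time interval, with no stationarity hypothesis. What your argument buys is simplicity and transparency: no covering, no finite subcollection of $\delta_{t_j}$'s, and the two components are handled independently. The only thing to be slightly careful about (and you handle it fine) is that "$\lim_{u\downarrow 0}g(u)=0$ with $g(0)=0$" does give $\sup_{[0,\eta]}g\to 0$ as $\eta\downarrow 0$; passing from pointwise to uniform behaviour is free precisely because you have reduced to a one-sided neighbourhood of the single point of continuity $u=0$, whereas the paper must deal with all of $[0,N]$ and so needs compactness.
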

\proof
For notational simplicity we write $X'_s=(b_s,\xi_s)$.
Fix $a>0$. For any $t$ there exists $\delta_t>0$ such that
 ${\mathfrak M}(|X'_t-X'_s|\geq \varepsilon/2) \leq a/2$
 for $|t-s|<\delta_t$ by stochastic continuity. Let $I_t=(t-\delta_t/2,t+\delta_t/2)$.
 Since $I_t$ is compact, there exists a finite covering $I_{t_j}$, $j=1,\ldots,n$, such that
 $\cup_{j=1}^n I_{t_j}\supset[0,N]$. Let $\delta=\min_{j=1,...,n}\delta_{t_j}$.
 If $|s-t|<\delta$ and $s,t\in [0,N]$, then $t\in I_{t_j}$ for some $j$, hence $|s-t_j|<\delta_{t_j}$ and
 $$
 {\mathfrak M} (|X'_t-X'_s|\geq \varepsilon)
 \leq {\mathfrak M} (|X'_t-X'_{t_j}|\geq \varepsilon)
 +{\mathfrak M}(|X'_{t_j}-X'_s|\geq \varepsilon) <a.
 $$
Hence the lemma follows.
\qed

\begin{lemma}\label{cadlag}
The process $({Z}_t)_{t\geq 0}$ has a c\`adl\`ag version with respect to ${\M}$.
\end{lemma}
\proof
Let $(Z_t')_{t\geqslant0}$ be a c\`adl\`ag process defined by
\begin{align}
\label{mod}
{Z}'_t(\omega)= \left\{\begin{array}{ll}
\underset {s\in {\mathbb  Q}, s \downarrow t}\lim  { Z}_s(\omega) \quad& \omega\in\Omega'',\\
0 \quad& \omega\notin\Omega''.
\end{array}
\right.
\end{align}
By Lemma \ref{11.2}   the process $({Z}_t)_{t\geq 0}$  is stochastically continuous, which implies that
there exists a sequence $\seq s$ such  that
$
\underset {s_n\in {\mathbb  Q}, s_n \downarrow t}\lim   {Z}_{s_n}(\omega) ={Z}_t(\omega)
$
for $\omega \in \Omega '''=\hir^{[0,\infty)}\setminus N_t$ with a null set $N_t$. We also see by
the definition of  $\proo{Z'}$  that
$
\underset {s_n\in{\mathbb  Q}, s_n \downarrow t}\lim   {Z}_{s_n}(\omega) ={Z}'_t(\omega)
$
for $\omega \in \Omega ''$, and ${\M}(\Omega'')=1$ by Lemma  \ref{11.2}. 
Hence for
every $t$ it follows that ${Z}_t(\omega) ={Z}'_t(\omega) $ for $\omega \in \Omega''\cap \Omega '''$,
and $ {\M}(\Omega''\cap \Omega ''')=1$. 
Thus $\proo {Z'}$ is a c\`adl\`ag  version of
$\proo Z$.
\qed

We denote the c\`adl\`ag version of $( Z_t)_{t\geq 0}$ by $(\bar{ {Z_t}})_{t\geq 0}$, and the set of
$\hir$-valued c\`adl\`ag paths by 
$$\DD_\hir ^+ =D(\RR^+, \hir).$$
Note that $(\bar{ {Z_t}})_{t\geq 0}$
is a random  process on the probability space $(\hir ^{[0,\infty)}, \s({\mathcal  A}), {\M} )$, 
and the map
$$
\bar{{ Z_\cdot}}:(\hir ^{[0,\infty)}, \s({\mathcal  A}), {\M} )\to (\DD_\hir ^+ , \tilde \C^+   )
$$
is measurable, where $\tilde \C^+  $ denotes the $\s$-field generated by cylinder sets. This map induces the image
measure $\P _+={\M} \circ {\bar{ {Z_\cdot}}} \f$ on $(\DD_\hir ^+ , \tilde \C^+  )$. 
The coordinate process
$({ X_t^+})_{t\geq0}$ on $(\DD_\hir ^+ , \tilde \C^+    ,\P _+)$ satisfies $\bar{ {Z_t}}\stackrel{\rm d}{=} X^+_t$ for
$t\geq0$.  Let $(y,\xi)\in\hir $ and define a regular conditional probability measure on $(\DD_\hir ^+,\tilde \C^+  )$ by
$$
\P _+^{(y,\xi)}(\cdot)=\P _+(\cdot \mid   X^+_{0}=(y,\xi)).
$$
Since the distribution of $X^+_{0}$ is ${\rm M}$, we see that
$
\P_+  (A)=\int_{\hir } \Ebb_{\P_+} ^{(y,\xi)}[\one _{A}] d{\rm M}(y,\xi)$. 
Using $\P_+^{(y,\xi)}$ we have
\begin{align}
\label
{pos}
&(f, e^{-t\LN }g)_{\LM}=
\int _\hir \Ebb_{\P_+}^{(y,\xi)}[\bar f(X^+_0) g(X^+_t)]
d{\rm M}(y,\xi),\\
\label{semigroup3}
&( e^{-t\LN }g) (y,\xi)=
 \Ebb_{\P_+}^{(y,\xi)} [g(X^+_t)].
\end{align}

\begin{lemma}
\label{markov}
The random process $(X^+_{t})_{t\geqslant0 }$ on $(\DD_\hir ^+,\tilde \C^+    ,\P_+^{(y,\xi)})$ has the Markov
property with respect to the natural filtration $\sigma(X^+_{s}, 0\leq s\leq t)$.
\end{lemma}
\proof
In this proof we set $z=(y,\xi), z_j=(y_j,\xi_j)\in \hir $ for notational simplicity.
Let
\begin{align}
p_{t}(z, A)=(e^{-t\LN }\one_A)(z), \quad A\in
\B_\hir .
\end{align}
Notice that
$
p_{t}(z, A)=\Ebb_{\P_+}^{z}[\one _{A}(X^+_{t})]=\Ebb_{\P_+ }[\one _{A}(X^+_{t})| X^+_{0}=z]$. 
We show that $p_{t}(z, A)$  is a probability transition kernel, i.e., 
(a) $p_{t}(z, \cdot)$ is a probability measure on $\B (Q)$, 
(b) the function $z\mapsto p_{t}(z, A)$ is  Borel measurable, and
(c) the Chapman-Kolmogorov identity $\int_\hir p_t(z,A) p_s(x,dz)=p_{s+t}(x,A)$ is satisfied. 
First
note that by \kak{pos} it is easy to see that  $e^{-t\LN }$ is positivity improving. For every
function $f\in\hhh $ such that $0\leq f\leq\one$, we have
$$
( e^{-t\LN }f)(z)=\Ebb_{\P_+}^{z}[f(X^+_{t})]\leq
\Ebb_{\P_+}^{z}[\one]=1.
$$
Thus $0\leq e^{-t\LN }f\leq\one$  and $e^{-t\LN }\one=\one$, and (a)-(b) follow. We can also show that the
finite dimensional distributions are given by
\begin{align*}
\Ebb_{\P_+}\left[\prod_{j=1}^{n}\one_{A_{j}}(X^+_{t_{j}})\right] =\int_{\hir ^{n}}
\prod_{j=1}^{n}\one_{A_{j}}(z_j)\prod_{j=1}^{n}p_{t_{j}-t_{j-1}}(z_{j-1}, dz_j).
\end{align*}
This implies (c), and thus $(X^+_{t})_{t\geq0}$ is a Markov process by 
\cite[Proposition 2.17]{LHB}.
\qed

Now we extend $(X^+_t)_{t\geq 0}$ to a Markov process indexed by the real line $\RR$.  
Consider the product
probability space 
$(\hat\XX_\hir,\hat \C ,\hat  \P^{(y,\xi)})$ with $\hat\XX_\hir=\DD_\hir^+\times\DD_\hir^+ $,
$\hat \C     =\tilde \C^+    \otimes \tilde \C^+   $ and $\hat  \P^{(y,\xi)}=\P_+^{(y,\xi)}\otimes\P_+^{(y,\xi)}$, and let
$\pro {\hat  X}$ be the random  process on the product space defined by 
$$
\hat   {X}_t(\omega)= \left\{\begin{array}{rl}
 X_t^+(\omega_1) \quad t\geq 0,\\
X_{-t}^+(\omega_2) \quad t\leq0,
\end{array}
\right.\quad \omega=(\omega_1,\omega_2)\in\hat\XX_\hir.$$ 
\begin{lemma}
\label{hiroc}
It follows that
\begin{description}
\item[\rm (i)]
${\hat { X}_0}={(y,\xi)}$   $\hat  \P^{(y,\xi)}$-almost surely,
\item[\rm (ii)]
${\hat { X}_t},t\geq 0$ and ${\hat { X}_s}, s<0$ are independent,
\item[\rm (iii)]
${\hat { X}_t}\stackrel{\rm d}{=}{\hat { X}_{-t}}$ for all $t\in\RR$,
\item[\rm (iv)]
$({\hat { X}}_t)_{t\geq 0}$ and $({\hat { X}}_t)_{t\leq0}$ are Markov processes for the filtrations
$\s(\hat { X}_s,0\leq s\leq t)$ and $\s( \hat { X}_s,t\leq s\leq 0)$, respectively,
\item[\rm (v)]
for $f_0,\ldots,f_n\in {\LM}$ and $-t=t_{0}\leq t_{1}\leq\ldots\leq t_{n}=t$, we have
\eq{p33}
\Ebb_{\hat  {\P }_+}\left[ \prod_{j=0}^{n}f_{j}(\hat { X}_{t_{j}})\right] =\lk \bar  f_{0},
e^{-(t_{1}+t){L}_{\rm N} }f_{1}\cdots f_{n-1}e^{-(t-t_{n-1}){L}_{\rm N}  }f_{n}\rk _{\LM}.
\en
\end{description}
\end{lemma}
\proof
(i)-(iii) are straightforward. 
(iv) follows from Lemma \ref{markov}. 
(v) follows from \kak{p2} and a simple
limiting argument.
\qed

\medskip
\noindent
{\it Proof of Theorem \ref{pp11}:} We show that the random process $\pro {\hat  { X}}$ defined on
$(\hat\XX_\hir, \hat  \C^+, \hat {\P}^{(y,\xi)})$ is a $P(\phi)_1$-process associated with
$((\hir, \B_\hir,  {\rm M}), {L}_{\rm N})$. The Markov property, reflection symmetry and shift invariance
follow from Lemma \ref{hiroc}. The c\`adl\`ag property of  $t\mapsto \hat  X_t$, $t\geq0$, (resp.
c\`agl\`ad property of $t\mapsto \hat  X_t$, $t\leq 0$) was shown in Lemma~\ref{cadlag}. 
Thus the map
$\hat { X}_\cdot: (\hat\XX_\hir, \hat  \C^+ , \hat {\P}^{(y,\xi)})\to (\DD_\hir ,\tilde \C_\hir)$ is
measurable and the image measure $\tilde \P^{(y,\xi)}=\hat  {\P}^{(y,\xi)}\circ \hat {X}_\cdot\f$ defines a
probability measure on $(\DD_\hir ,\tilde \C_\hir)$. 
Hence the coordinate process $\pro {{\tilde X}}$ on 
$(\DD_\hir ,\tilde \C_\hir, \tilde\P^{(y,\xi)})$
satisfies $ \tilde X_t\stackrel{\rm d}{=} \hat  { X}_t$, and is a $P(\phi)_1$-process associated with 
$\left((\hir, \B_\hir , {\rm M}), {L}_{\rm N}\right)$.
\qed

\subsection{$P(\phi)_{1}$-process associated with the non-relativistic  Nelson model}
We consider the non-relativistic case. As before, formulae \eqref{rmm}-\eqref{nelgst} hold similarly
for the non-relativistic  setting. Let now 
$$\XXQ=C(\RR,\hir)$$ be the set of continuous paths with values in $\hir$ and
indexed by the real line $\RR$, and $\C_\hir$ the $\s$-field generated by the cylinder sets. A second main result
of this section is the following.
\bt{pp1}\TTT{$P(\phi)_1$-process associated with the non-relativistic  Nelson Hamiltonian}
\label{classnel}
Let $(y,\xi)\in \hir$. Then the following hold.
\begin{description}
\item[\rm (i)]
There exists  a probability measure $\P^{(y,\xi)}$ on the measurable space $(\XXQ,  \C_\hir)$ such that the coordinate
process $\pro X$ on $(\XXQ,  \C_\hir,\P^{(y,\xi)})$ is a $P(\phi)_1$-process associated with 
$((\hir, \B_\hir , {\rm M}), \LN)$.
\item[\rm (ii)]
The function $t\mapsto X_t$ is almost surely continuous.
\end{description}
\et

The construction involving the counterpart of the probability measure \eqref{mmeasure} is similar as in 
Theorem~\ref{relnel}, and we leave the details to the reader. We only discuss path continuity, which makes a substantial
difference from the previous case. We write $Z_t=(x_t,\xi_t)$, where $x_t\in\BR$ and $\xi_t\in Q$ are the
coordinate processes $x_t(\omega)=\omega_1(t)$ and $\xi_t(\omega)=\omega_2(t)$ for all $t\geqslant0$ and $\omega=
(\omega_1,\omega_2)\in \hir^{[0,\infty)}$.

\begin{lemma}
\label{cont}
The random process $(Z_{t})_{t\geq 0 }$ on $((\hir)^{[0,\infty)}, \s(\ms A))$ has a continuous version.
\end{lemma}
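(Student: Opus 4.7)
The plan is to exploit the Feynman--Kac representation from Corollary \ref{c1} to identify the finite-dimensional distributions (FDD) of $(Z_t)_{t\geq 0}$ under $\M$ with those of the underlying process $(Y_t,\xi_t)_{t\geq 0}$, whose trajectories are continuous by Proposition \ref{pphi}(2) and by construction of the infinite-dimensional Ornstein--Uhlenbeck process. Rather than verifying Kolmogorov's continuity criterion head-on---the unboundedness of $\gr$ and the Kato-class interaction weight make direct moment estimates for $\M$-increments awkward---I would transport continuity through a Girsanov-type reweighting living on the function space $\X\times\Y$.

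First I would unitarily transform the operator expression defining $\M_\Lambda$ using $U_{\rm g}\colon L^{2}(d{\rm M})\to L^{2}(d{\rm P})$, $f\mapsto \gr f$, which intertwines $\LN$ with $H-E$ and commutes with multiplication by indicator functions. This gives
\begin{equation*}
\M_\Lambda(A_0\times\cdots\times A_n)= e^{E(t_n-t_0)}\,(\gr\one_{A_0},\, e^{-(t_1-t_0)H}\one_{A_1}\cdots e^{-(t_n-t_{n-1})H}\one_{A_n}\gr)_{L^{2}({\rm P})},
\end{equation*}
and applying Corollary \ref{c1} with $\Phi=\Psi=\gr$ rewrites the right-hand side as
\begin{equation*}
e^{E(t_n-t_0)}\,\Ebb_{\P_{0}}\bigg[\prod_{j=0}^{n}\one_{A_j}(Y_{t_j},\xi_{t_j})\,\gr(Y_{t_0},\xi_{t_0})\gr(Y_{t_n},\xi_{t_n})\,e^{-\int_{t_0}^{t_n}\tau_{Y_r}\xi_r(\vpp)\,dr}\bigg].
\end{equation*}

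Next, for each $T>0$ I would introduce an auxiliary probability measure $\Q_T$ on $(\X\times\Y,\Sigma)$ with Radon--Nikodym density
\begin{equation*}
\frac{d\Q_T}{d\P_{0}}=Z_T\f\,\gr(Y_0,\xi_0)\gr(Y_T,\xi_T)\,e^{-\int_{0}^{T}\tau_{Y_r}\xi_r(\vpp)\,dr},
\end{equation*}
where Proposition \ref{fik} together with the ground-state equation $H\gr=E\gr$ yields $Z_T=(\gr,e^{-TH}\gr)_{L^{2}({\rm P})}=e^{-ET}$, so $e^{ET}Z_T=1$. For any $\Lambda=\{t_0,\ldots,t_n\}\subset[0,T]$ I would pad $\Lambda$ to $\Lambda'=\{0\}\cup\Lambda\cup\{T\}$ by inserting trivial factors $\one_{\hir}$ at the new endpoints. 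Full Kolmogorov consistency---which holds automatically because $\LN$ is self-adjoint on $\K$ and $e^{-\tau\LN}\one=\one$, so inserting $\hir$ at any position leaves the cylinder measure unchanged---gives $\M_{\Lambda'}(\hir\times A_0\times\cdots\times A_n\times\hir)=\M_\Lambda(A_0\times\cdots\times A_n)$, and comparison with the formula above yields
\begin{equation*}
\M(Z_{t_j}\in A_j,\,0\leq j\leq n)=\Q_T\lk(Y_{t_j},\xi_{t_j})\in A_j,\,0\leq j\leq n\rk.
\end{equation*}
Thus the FDD of $(Z_t)_{0\leq t\leq T}$ under $\M$ coincide with those of $((Y_t,\xi_t))_{0\leq t\leq T}$ under $\Q_T$.

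To conclude, since $\Q_T\ll\P_0$ with a $\P_0$-a.s.\ finite non-negative density, the $\P_0$-almost sure continuity of $t\mapsto(Y_t,\xi_t)$---guaranteed by Proposition \ref{pphi}(2) for the particle component and by $\xi\in C(\RR,Q)$ from the construction of the OU-process---passes to $\Q_T$. Pushing $\Q_T$ forward to $C([0,T],\hir)$ gives a consistent family of measures (they all have the same FDD as $\M$), whose projective limit as $T\to\infty$ is a probability measure on $C([0,\infty),\hir)$ with FDD equal to those of $\M$; its coordinate process is the desired continuous version. The main technical obstacle is the bookkeeping that forces $Z_T=e^{-ET}$ together with the observation that $\Lambda$ may be padded by $\{0,T\}$ without changing $\M_\Lambda$---it is precisely this padding that lets the $\gr$-weights at $0$ and $T$ combine with the prefactor $e^{-ET}$ to identify the two families of finite-dimensional distributions.
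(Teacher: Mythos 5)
Your proposal takes a genuinely different route from the paper, and it is essentially correct. The paper proves the lemma by a direct Kolmogorov--\v{C}entsov estimate: it decomposes $Z_t=(x_t,\xi_t)$, and establishes the fourth-moment bounds $\Ebb_\M[\|x_t-x_s\|^4_{\BR}]\le D_1|t-s|^2$ and $\Ebb_\M[\|\xi_t-\xi_s\|^4_Q]\le D_2|t-s|^2$ by pulling the computation back through the Feynman--Kac formula (Proposition \ref{B}), using boundedness of $\grp$ and $\gr$, Brownian moment formulas, Gaussian moment bounds for the OU-component (Lemma \ref{bound}), and a Khasminskii-type bound on $e^{-\int V}$. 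Your approach bypasses all of these moment estimates: you identify the finite-dimensional distributions of $(Z_t)_{0\le t\le T}$ under $\M$ with those of $((Y_t,\xi_t))_{0\le t\le T}$ under a ground-state-tilted measure $\Q_T\ll\P_0$, so that the $\P_0$-almost sure continuity of $(Y,\xi)$ is inherited. The normalization $e^{ET}Z_T=1$ via $H\gr=E\gr$ and the padding of $\Lambda$ by $\{0,T\}$ are both correct and do the bookkeeping cleanly. One nice feature of your route is that it does not invoke $\sup|\gr|<\infty$ or $\sup|\grp|<\infty$, whereas the paper's proof relies on these; another is that it gives, in one stroke, exactly the measure on $C([0,\infty),\hir)$ that the paper subsequently needs. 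The price is that you only obtain Hölder information implicitly, while the paper's moment estimate gives an explicit $1/4$-Hölder modulus; in this lemma only continuity is used, so you lose nothing.

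There is one gap worth closing. The lemma (and the sentence immediately following it in the paper: ``Note that $(\bar Z_t)_{t\geq 0}$ is a stochastic process on the probability space $((\hir)^{[0,\infty)}, \s(\A), \M)$\ldots'') asks for a continuous \emph{modification} on the original probability space, not merely a continuous process on some other space with the same law. Your projective-limit construction produces a probability measure on $C([0,\infty),\hir)$ whose coordinate process matches the FDD of $\M$, but this lives on a different sample space. You should add the standard bridging step: because a continuous process with the same FDD exists, the restriction of $(Z_t)$ to rationals is $\M$-a.s.\ uniformly continuous on compacts (this is a property determined by the joint law of the countable family $(Z_q)_{q\in\Bbb Q\cap[0,T]}$), and $(Z_t)$ is stochastically continuous (again determined by the law and verified via your FDD identification); setting $\bar Z_t=\lim_{q\to t,\,q\in\Bbb Q}Z_q$ on the set of such uniformly continuous rational paths then defines a continuous process on $((\hir)^{[0,\infty)},\sigma(\A),\M)$ with $\bar Z_t=Z_t$ $\M$-a.s.\ for each fixed $t$. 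Alternatively, observe that since the continuous modification on each $[0,T]$ obtained this way is unique up to indistinguishability, they patch to a modification on $[0,\infty)$, which avoids the explicit projective limit. Either way, the argument is complete once this is made explicit.
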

\proof
 Define
$\|Z_t\|_{\hir}=\sqrt{\|x_t\|^2_{\BR}+\|\xi_t\|_Q^2}$. 
Let
\begin{align*}
&C_2=\underset{x\in \BR} \sup\|\gr   (x,\cdot)\|_{L^2(\rm G)},\\
&D_V=\sup_{t\in[0,1]}\underset{y\in \BR}\sup\Ebb\left[
\exp\lk -4\int_0^t V(B_r) dr\rk \right],\\
&C_V=\sup_{t\in[0,1]}\underset{y\in \BR}\sup\Ebb\left[
\exp\lk -2\int_0^t V(B_r) dr\rk \right],\\
&D_\vp= 2\exp(\frac{5}{2}\|\vp/\omega\|^2+2\|\vp/\sqrt\omega\|^2),\\
&C_\vp= \sqrt 2\exp(5\|\vp/\omega\|^2+4\|\vp/\sqrt\omega\|^2).
\end{align*}
We notice that $\grp \geq0$ and $\gr  \geq0$. 
Using the Kolmogorov-\v{C}entsov theorem, 
the estimate
\eq{kc}
\Ebb_\M [\|  Z_{t}-Z_{s}\| _{\hir }^{\alpha}]\leq D|t-s|^{\beta+1},\quad s,t\in [S,T]
\en
with a suitable $D>0$, $\alpha\geq1$ and $\beta>0$  implies that 
$(Z_{t})_{s\leq t\geq T}$ has a continuous version. 
Since
$$
\|Z_t\|_{\hir}^4\leq 2(\|x_t\|^4_{\BR}+\|\xi_t\|_Q^4),
$$
and the translation invariance of $x_t$ and $\xi_t$ under $\M$ 
it suffices to prove the bounds
\begin{align}
&\label{33}
\Ebb_\M [\|  x_{t}-x_{s}\| _{\BR}^{4}]\leq D_1|t-s|^{2},\\
&\label{34}
\Ebb_\M [\|  \xi_{t}-\xi_{s}\| _{Q}^{4}]\leq D_2|t-s|^{4}
\end{align}
for $s,t\in [0,1]$. 
To obtain \kak{33}, recall the moments formula $\Ebb[|B_t-B_s|^{2n}]= K_n |t-s|^n$, with a constant $K_n$ for $n\geq0$. Let
$x_t=(x_t^1,\ldots,x_t^d)$. 
We have for all $1\leq i,j\leq d$ that
\begin{align*}
&\Ebb_\M [(x_t^i)^n  (x_s^j)^m]\\
&=\int_\BR \!\! dx\Ebb_{\W\times \G}^x\left[
(B_{0}^i)^n (B_{T}^j)^m  \grp(B_0)\grp(B_{T})
e^{-\int_0^{T} V(B_r) dr}
\gr(B_0,\xi_0)
\gr(B_{T},\xi_{T})
e^{-\int_0^{T} \tau_{B_r}\xi_r(\vpp ))dr}
\right]e^{TE}
\end{align*}
with $T=t-s$. 
We set $X=|B_{0}- B_{t-s}|$ and write $\Ebb^x$ for 
$\Ebb_\W^x$. 
Thus we have
\begin{align*}
&\Ebb_\M [|x_t- x_s|^4]\\
&=
\int_\BR \!\! dx\Ebb_{\W\times \G}^x\left[ 
X^4\grp(B_0)
\grp(B_{t-s})
e^{-\int_0^{t-s} V(B_r) dr}
\gr(B_0,\xi_0)
\gr(B_{t-s},\xi_{t-s}) 
e^{-\int_0^{t-s} \tau_{B_r}\xi_r(\vpp )} 
 \right]
e^{|t-s|E}\\
&
=
\int_\BR \!\! dx\Ebb^x\left[X^4 \grp(B_0)
\grp(B_{t-s}) e^{-\int_0^{t-s} V(B_r) dr} Y_t
\right]
e^{|t-s|E},
\end{align*}
where 
$$Y_t=
\Ebb_\G[
\gr(B_0,\xi_0)
\gr(B_{t-s},\xi_{t-s}) 
e^{-\int_0^{t-s} \tau_{B_r}\xi_r(\vpp )}].
$$
By 
the fact $|t-s|<1$, 
the Baker-Campbel-Hausdorff formula
\cite[Proposition 1.29]{HL20}  and 
\cite[Corollary 1.88 (1.5.21)]{HL20}, 
we have 
$$Y_t\leq C_2^2D_\vp.$$
Hence 
\begin{align*}
&\leq
C_2^2 D_\vp  e^{|t-s|E}
\int_\BR \!\! dx \Ebb^x\left[
\grp(x) \grp(B_{t-s})
X^4 
 e^{-\int_0^{t-s} V(B_r) dr} \right]\\
&\leq
C_2^2 D_\vp  e^{|t-s|E}
\lk
\int_\BR dx \Ebb^x[|\grp(B_{t-s})|^2]\rk^\han 
\lk 
\int_\BR dx \Ebb^x[
|\grp(x)|^2 X^{8}  e^{-2\int_0^{t-s} V(B_r) dr} ]\rk^\han\\
&\leq 
C_2^2 D_\vp e^{|t-s|E}
\|\grp\|
\lk 
\int_\BR dx |\grp(x)|^2 
\Ebb^x[
X^{16}]^\han
\Ebb^x[
 e^{-4\int_0^{t-s} V(B_r) dr} ]^\han\rk^\han\\
&\leq 
C_2^2 D_\vp  e^{|t-s|E}
\|\grp\|^2
D_V^{1/4}
\Ebb^0[
X^{16}]^{1/4}
\end{align*}
and 
we have 
\begin{align*}
\Ebb_\M [|x_t- x_s|^4]
&\leq
C_2^2 
\|\grp\|^2
D_\vp  e^{|t-s|E}
D_V^{1/4}
K_{16}^{1/4}|t-s|^2.
\end{align*}
Thus \kak{33} follows.

Next we prove \kak{34}.
Let $f\in \ms M_{+2}$.
In the same way as in the proof of \kak{33} we have
\begin{align*}
&\Ebb_\M [\xi_t(f)^n  \xi_s(f)^m]\\
&=\int_\BR \!\! dx\Ebb_{\W\times \G}^x\left[
\xi_0(f)^n \xi_T(f)^m  \grp(B_0)
\grp(B_{T})
e^{-\int_0^{T} V(B_r) dr}\gr(B_0,\xi_0)
\gr(B_{T},\xi_{T})
e^{-\int_0^{T} \tau_{B_r}\xi_r(\vpp ))dr}
\right]e^{|T|E}
\end{align*}
with $T=t-s$.
Let $X=|\xi_0(f)-\xi_{t-s}(f)|$.
Hence
\begin{align*}
&\Ebb_\M [|\xi_t(f)-\xi_s(f)|^{2n}]\\
&=
\int_\BR \!\! dx\Ebb_{\W\times \G}^x
\left[
\grp(B_0)
\grp(B_{t-s})X ^{2n} e^{-\int_0^{t-s} V(B_r) dr} \gr(B_0,\xi_0)\gr(B_{t-s},\xi_{t-s})
e^{-\int_0^{t-s} \tau_{B_r}\xi_r(\vpp ))dr}
\right]e^{|t-s|E}\\
&=
\int_\BR \!\! dx\Ebb^x
\left[
\grp(B_0)
\grp(B_{t-s})e^{-\int_0^{t-s} V(B_r) dr}
\Ebb_\G\left[X ^{2n}  \gr(B_0,\xi_0)\gr(B_{t-s},\xi_{t-s})
e^{-\int_0^{t-s} \tau_{B_r}\xi_r(\vpp ))dr}
\right]
\right]e^{|t-s|E}.
\end{align*}
We can estimate the term $\Ebb_\G[\ldots]$ above  
by Schwartz inequality.  
We have 
\begin{align*}
&\Ebb_\G\left[X ^{2n}  \gr(B_0,\xi_0)\gr(B_{t-s},\xi_{t-s})
e^{-\int_0^{t-s} \tau_{B_r}\xi_r(\vpp ))dr}
\right]\\
&\leq 
\Ebb_\G[X ^{4n}\gr(B_0,\xi_0)\gr(B_{t-s},\xi_{t-s})]^\han
\Ebb_\G[e^{-2\int_0^{t-s} \tau_{B_r}\xi_r(\vpp ))dr} \gr(B_0,\xi_0)\gr(B_{t-s},\xi_{t-s})]^\han\\
&\leq C_2C_\vp
\Ebb_\G[X ^{4n} \gr(B_0,\xi_0)\gr(B_{t-s},\xi_{t-s})]^\han\\
&\leq C_2C_\vp 
\Ebb_\G[X ^{8n} |\gr(B_0,\xi_0)|^2]^{1/4}
\Ebb_\G[ |\gr(B_{t-s},\xi_{t-s})|^2]^{1/4}\\
&\leq C_2^{3/2}C_\vp
\Ebb_\G[X ^{8n} |\gr(B_0,\xi_0)|^2]^{1/4}
\end{align*}
Then 
\begin{align*}
&\Ebb_\M [|\xi_t(f)-\xi_s(f)|^{2n}]\\
&\leq 
C_2^{3/2}
C_\vp
e^{|t-s|E}
\int_\BR \!\! dx\Ebb^x
\left[
\grp(B_0)
\grp(B_{t-s})e^{-\int_0^{t-s} V(B_r) dr}
\Ebb_\G[X ^{8n} |\gr(B_0,\xi_0)|^2]^{1/4}
\right]\\
&
\leq
C_2^{3/2}
C_\vp
e^{|t-s|E}
\lk
\int_\BR \!\! dx
|\grp(x)|^2
\Ebb_\G[X ^{8n} |\gr(B_0,\xi_0)|^2]^\han
\rk^\han
\\
&\hspace{3cm}\times \lk
\int_\BR \!\! dx
\Ebb^x
[
|\grp(B_{t-s})|e^{-\int_0^{t-s} V(B_r) dr}
]^2\rk^\han
\\
&\leq
C_2^{3/2}
C_\vp
e^{|t-s|E}
\lk
\int_\BR \!\! dx
|\grp(x)|^2
\rk^{1/4}
\lk
\int_\BR \!\! dx
|\grp(x)|^2
\Ebb_\G[X ^{8n} |\gr(B_0,\xi_0)|^2]
\rk^{1/4}
\\
&\hspace{3cm}\times\lk
\int_\BR \!\! dx
\Ebb^x
[|\grp(B_{t-s})|^2]
\Ebb^x
[e^{-2\int_0^{t-s} V(B_r) dr}
]\rk^\han
\\
&\leq
C_2^{3/2}
\|\grp\|^{3/2}
C_\vp
C_V^\han
e^{|t-s|E}
\mathfrak{M}(X ^{8n})^{1/4}.
\end{align*}
Here 
\begin{align*}
{{\mathfrak M}}(
X ^{8n})
=\int_{\mathbb M}\Ebb_{\W}^y \Ebb_\G^\xi[X ^{8n}] dm
\end{align*}
with 
$dm=\grp(x)^2 \gr(x,\xi)^2 dy d{\rm G}$. 
It is easy to estimate 
$\mathfrak{M}(X ^{8n})$. 
Let $N$ be the number operator. 
Then 
$\|e^{\beta N} \gr\|<\infty$ for any $\beta>0$ and 
$$ \||\xi_t-\xi_0|^{2n} \gr\|\leq 
C_n \|(N+\one)^n\gr\| \|(1-e^{-t\omega})f\|^{2n}$$ with some constant $C_n$. 
Hence 
\begin{align*}
\mathfrak{M}(X ^{8n})
&\leq 
\|\grp\|_\infty^2 \||\xi_{t-s}(f)-\xi_0(f)|^{4n}\gr\|^2\\
&\leq 
C_{2n}^2\|\grp\|_\infty^2 
\|(N+\one)^{2n}\gr\|^2
\|(1-e^{-|t-s|\omega})f\|^{8n}\\
&\leq 
C_{2n}^2 \|\grp\|_\infty^2 
\|(N+\one)^{2n}\gr\|^2
|t-s|^{8n} \|f\|_{\ms M_{+2}}^{8n}. 
\end{align*}
Finally we have the bound 
\begin{align*}
&\Ebb_\M [|\xi_t(f)-\xi_s(f)|^{2n}]\\
&\leq
C_2^{3/2}
\|\grp\|^{3/2}
C_\vp
C_V^\han 
e^{|t-s|E}
\mathfrak{M}(X^{8n})^{1/4}\\
&\leq
C_2^{3/2}
\|(N+\one)^{2n}\gr\|^\han 
\|\grp\|^{3/2}
\|\grp\|_\infty^\han 
C_\vp
C_V^\han
C_{2n}^\han  
e^{|t-s|E}
|t-s|^{2n} \|f\|_{\ms M_{+2}}^{2n}
\end{align*}
Hence
\eq{tg}
\Ebb_\M \left[\frac{|\xi_t(f)-\xi_s(f)|^{2n}}{\|f\|^{2n}_{\ms M_{+2}}}\right]
\leq D
|t-s|^{2n}
\en
with a constant 
$$D=C_2^{3/2}
\|(N+\one)^{2n}\gr\|^\han 
\|\grp\|^{3/2}
\|\grp\|_\infty^\han 
C_\vp
C_V^\han
C_{2n}^\han
e^{|t-s|E}.$$ 
Since $\|\xi_t-\xi_s\|_Q=\sup_{f\not=0}|\xi_t(f)-\xi_s(f)|/\|f\|_{\ms M_{+2}}$,
for every $\eps>0$ there exists $f_\eps\in  \ms M_{+2}$ such that
$\|\xi_t-\xi_s\|_Q\leq |\xi_t(f_\eps)-\xi_s(f_\eps)|/\|f_\eps\|_{\ms M_{+2}}+\eps$. Thus
together with  \kak{tg} we have
\eq{tg2}
\Ebb_\M \left[{|\xi_t-\xi_s|^4}\right]-\eps
\leq D |t-s|^4,
\en
and thus \kak{34} follows.
\qed

\section{Functional central limit theorems}
\subsection{FCLT for additive functionals}
In this section we obtain a FCLT
 for  the Nelson Hamiltonians as an immediate consequence.
Let $\proo B$ be one-dimensional  Brownian motion. We first define somes notions that we need in this section.
\begin{definition}
\begin{enumerate}
\item[(i)]  
A process is said to be stationary if $( Z_{t_{1}}, Z_{t_{2}},\ldots, Z_{t_{n}})$ has the same
distribution as $( Z_{\tau+t_{1}}, Z_{\tau+t_{2}},\ldots, Z_{\tau+t_{n}})$ for all $t_{1},\ldots,t_{n},\tau\in\mathbb{R}$.
\item [(ii)] 
Let $(E,\rho)$ be a $\sigma$-finite measure space. A semigroup, $P_t$ of bounded operators is called ergodic if for all $ f\geq0,g\geq0$ but $f\not\equiv 0,g\not\equiv 0$, $f,g\in L^2(E)$ there is a $t\geqslant0$ with $\lk f, P_t g\rk\neq 0.$
\end{enumerate}
\end{definition}
\begin{remark}\label{ergo}
Let $e^{-tL}$ be a semigroup  of self-adjoint positivity preserving operators. Then the semigroup is ergodic if and only if it is positivity improving.
\end{remark}

A fundamental tool used in this section is as follows: 
\bp{kv}\TTT {\cite[Theorem  2.1]{bha82}}
Let $(\Omega,\F, \proo \F, P)$ be a filtered probability space and  
$(\mathbb{R},\F_{\mathbb{R}},\pi)$ a probability space. 
Let $Z=\proo Z$ be  an $\mathbb{R}$-valued Markov process  with respect to $\proo \F$.
Assume that 
\bi
 \item[\rm (i)]  $f$ is a non-constant bounded function for $f\in D(L^{-1/2})$, 
 and $\int_\RR  f d\pi =0$, 
\item[\rm (ii)] 
 $Z$ is stationary  and ergodic Markov process with a positive self-adjoint generator $L$ acting in $L^2(\mathbb{R},d\pi)$,
\item[\rm (iii)] $\pi$ is an invariant initial distribution  for $Z$; 
$\Ebb_P[f(Z_t)]=\int_{\mathbb{R}} \Ebb_{P^z}[f(Z_t)]d\pi(z)=\int_{\mathbb{R}} f(z) d\pi(z)$=$\Ebb_P[f(Z_0)]$.
\ei
 Then  the random process 
 $\frac{1}{\sqrt{s}} \int_0^{st} f(Z_r)dr $ weakly converges to $\SS B_t
 $ as $s \to \infty$, where  $\SS=2\|L^{-1/2} f\|^2$.
\ep

We define probability measures on sets 
$\hir\times \DD_\hir$ and $\hir\times\XX_\hir$ by 
\eq{P}
d \tilde \P=d\tilde \P^{(y,\xi)}d{\rm M}(y,\xi),\quad 
d \P=d \P^{(y,\xi)}d{\rm M}(y,\xi),
\en
 respectively. 
We constructed two $P(\phi)_1$-processes, 
$\pro{{\tilde X}}$ on $(\DD_\hir,\tilde\C_\hir,\tilde \P^{(y,\xi)})$ associated with relativistic case 
and 
$\pro{{X}}$ on $(\XX_\hir,\C_\hir,\P^{(y,\xi)})$ for non-relativistic case. 
Our process $\pro{{\tilde X}}$ (resp. $\pro{{X}}$) is Markov with semigroup $e^{-t\LN}$ and 
\begin{align*}
&p_{t}(z, A)=(e^{-t\LN }\one_A)(z)=\Ebb_{\tilde \P}^{z}[\one _{A}(\tilde X_{t})]
=\Ebb_{\tilde \P }[\one _{A}(\tilde X_{t})| \tilde X_{0}=z],\\
&{\rm (resp}. \ \ \ p_{t}(z, A)=(e^{-t\LN }\one_A)(z)=\Ebb_{\P}^{z}[\one _{A}(X_{t})]=\Ebb_{\P }[\one _{A}(X_{t})| X_{0}=z])
\end{align*}
for 
$z=(y,\xi)$ and $A\in
\B(\BR)\times\B (Q)$. 
We define the additive functional
\begin{align}\label{clt}
\FF{t} =\int_{0}^{t}f(\tilde X_{s})ds \quad 
({\rm resp.} \ \ \FFF{t} =\int_{0}^{t}f(X_{s})ds).
\end{align}
The purpose of this section is to show 
the limit 
$\FF{st}/\sqrt s\to \SS(f)B_t$ (res. $\FFF{st}/\sqrt s\to \SS(f)B_t$) as $s\to\infty$ in some sense.

\begin{lemma}\label{inv}
The probability measure ${\rm M}$  associated to the $P(\phi)_1$-processes 
$\proo {\tilde X}$ (resp. $\proo X$)  is an invariant measure  
 in the sense that
$$\int_\hir  \Ebb_{\P}^{(y,\xi)}[f(\tilde X_t)]d{\rm M}(y,\xi)=
\int_\hir  f(y,\xi)d{\rm M}(y,\xi)$$
(resp. $\int_\hir  \Ebb_{\P}^{(y,\xi)}[f(X_t)]d{\rm M}(y,\xi)=
\int_\hir  f(y,\xi)d{\rm M}(y,\xi)$
)
 for every $f\in L^1({\rm M})$. 
 \end{lemma}
\proof
Since $0$ is the lowest eigenvalue of $\LN$, we have 
$$\int_\hir  \Ebb_{\P}^{(y,\xi)}[f(\tilde X_t)]d{\rm M}(y,\xi)=
(\one, e^{-t\LN}f)_{\LM}=
(e^{-t\LN}\one, f)_{\LM}=
(\one, f)_{\LM}=
\int_\hir  f((y,\xi)) d{\rm M}(y,\xi).$$
Similarly we can see that 
$\int_\hir  \Ebb_{\P}^{(y,\xi)}[f(X_t)]d{\rm M}(y,\xi)=
\int_\hir  f((y,\xi)) d{\rm M}(y,\xi)$. 
This implies the lemma. 
\qed

A FCLT for the process $\proo F$ under $\P$ is now immediate. 
Using Proposition \ref{kv}, we have the following result for both the relativistic and non-relativistic  Nelson models.
Now we can state the second main theorem in this paper. 
\begin{theorem}
\label{main5}
Suppose that $f\in D(\LN^{-1})$ and $\int_\hir  f d{\rm M}=0$.
Let 
$\SS(f)=2 \| \LN^{-1/2} f \|_{\LM}^2$.
Then the following holds:
\bi
\item
{\rm (Relativistic case)}
$\d \lim_{s\to\infty}\frac{1}{\sqrt s} \FF{st}
=\SS(f) B_t$.
\item
{\rm (Non-relativistic case)}
$\d \lim_{s\to\infty}\frac{1}{\sqrt s} \FFF{st}
=\SS(f) B_t$.
\ei
Here the convergence is in the weak sense. 
\end{theorem}
\proof
We give the proof only for the relativistic case. The proof for non-relativistic case is similar. 
Suppose that 
$f\in D(\LN^{-1/2})$ is  $\RR$-valued. 
$\proo {\tilde X}$ is Markov process with positive self-adjoint generator $\LN$.  
By Lemma~\ref{denden} 
$e^{-t\LN}$ for every $t\geqslant0$ is positivity improving. Hence, by Remark \ref{ergo}, the random process $\proo X$ is ergodic. 
By Lemma \ref{inv}, the probability measure $\rm M$ is an  invariant measure. 
The stationarity follows from the reflection symmetry and shift invariance of $P(\phi)_1$ processes.  Thus all assumptions of Proposition \ref{kv} are verified and we deduce that $ (\FF{t})_{t\geqslant 0}$ satisfies a functional central limit theorem relative to 
$\tilde \P$, and the variance is given by $-2 \| \LN^{-1/2} f \|_{\LM}^2$. \qed

\subsection{Examples of diffusion constants}
Here we give some examples of functions 
$f\in D(\LN ^{-1/2})$, 
$f:\hir\ni(x,\xi)\mapsto f(x,\xi) \in \mathbb C$, and $\Ebb_{\rm M}[f]=0$ of direct
interest in the FCLT. 
For simplicity we assume  
$\omega(k)=\sqrt{|k|^2+\nu^2}$ with $\nu>0$. 
\begin{lemma}\label{domain}
Let 
$f$ be a function on $\hir$ such that $f\gr\in L^2({\rm P})$.
Suppose that $(\gr, f\gr)_{L^2({\rm P})}=0$. Then 
$\Ebb_{\rm M}[f]=0$ and $f\in D(\LN^{-1/2})$, equivalently 
$f\gr\in D((H-E)^{-1/2})$. 
\end{lemma}
\proof
In the case of $\nu>0$ 
it is established in e.g., \cite{bfs2,ge} that 
$\inf \Spec(\LN)=0$ is discrete. 
We set $\inf(\Spec(\LN)\setminus\{0\})=\lambda>0$, 
and it follows that $\Spec(\LN)\subset \{0\}\cup[\lambda,\infty)$. 
Let $E$ be the spectral projection of $\LN$. 
We divide the Hilbert space $\LM $ as 
$$\LM =E(\{0\})\LM \oplus E([\lambda,\infty))\LM .$$
Since the ground state $\one$ of $\LN$ is unique, $E(\{0\})\LM $ is the one dimensional subspace spanned  by $\one$. 
Hence  $(\gr, f\gr)_{L^2({\rm P})}=0$ implies that 
$f\in \LM $ and $(\one, f)_{\LM }=0$. 
We have 
 $f\in  E([\lambda,\infty))\LM $. 
Note that 
$E([\lambda,\infty))\LM \subset D(\LN^{-1/2})$, and $f\in D(\LN^{-1/2})$ 
follows. Finally 
$\Ebb_{\rm M}[f]=(\one, f)_{\LM }=
(\gr, f\gr)_{L^2({\rm P})}=0$. The proof is complete.
\qed

We also introduce the notation 
$$[f]=f-(\gr, f\gr)_\LP.$$

\begin{example}\label{e1}
\rm{
Let $f(x,\xi)=[ \gamma\cdot x]$. 
By Remark \ref{X},  we have $\gr\in D(\gamma\cdot x)$.   
We see that $(\gr, [\gamma\cdot x]\gr)=0$, which implies that 
$\Ebb_{\rm M}[[\gamma\cdot x]]=0$ and $[\gamma\cdot x] \gr\in D(\LN^{-1/2})$ by Lemma \ref{domain}. 
We get 
$$\SS([\gamma\cdot x])=\|(H-E)^{-1/2} [\gamma\cdot x]\gr\|^2.$$
}
\end{example}

\begin{example}\label{e2}
\rm{
Let $f(x,\xi)= \gamma\cdot x$. 
Let $R_pF(x)=F(-x)$ for $F\in \LM$, and $r_fg(k)=g(-k)$. 
The second quantization of $r_f$ is denoted by $\Gamma(r_f)$. 
Set $R=R_p\otimes \Gamma(r_f)$. 
Suppose that $V(x)=V(-x)$ and $\vp(-k)=\vp(k)$. 
Then $\LN RF=R\LN F$ for $F\in D(\LN)$. 
We can see that $\gr$ also has reflection symmetry, i.e., $R\gr=\gr$. 
The reflection symmetry implies that 
$$(\gr, \gamma\cdot x \gr)_{\LP}=(R\gr, R\gamma\cdot x \gr)_{\LP}=
-(\gr, \gamma\cdot x \gr)_{\LP}=0.$$ 
We get $$\SS(\gamma\cdot x)=\|(H-E)^{-1/2} \gamma\cdot x\gr\|^2.$$
}
\end{example}

\begin{example}\label{enew}
\rm{
Let $R$ be the reflection defined in Example \ref{e2}. 
Let $P_f=\int k\add(k) a(k) dk$ be the field momentum.
Suppose that $V(x)=V(-x)$ and $\vp(-k)=\vp(k)$. 
The reflection symmetry implies that 
$$(\gr, \gamma\cdot P_f \gr)_{\LP}=(R\gr, R\gamma\cdot P_f \gr)_{\LP}=
-(\gr, \gamma\cdot P_f \gr)_{\LP}=0.$$ 
We get $$\SS(\gamma\cdot P_f)=\|(H-E)^{-1/2} \gamma\cdot P_f\gr\|^2.$$
A related example is given in \cite{BS}}
\end{example}

\begin{example}\label{e3}
\rm{
Let $f(x,\xi)=\xi(h)$ with $\|h\|_\LR<\infty$ and 
$\|h/\sqrt{\omega}\|_\LR<\infty$. 
We have $\gr\in D(\xi(h))$ by 
the estimate $\|\xi(h) F\|\leq (\|h/\sqrt\omega\|+\|h\|)\|(\hf+\one)^\han F\|$. 
We see that $(\gr, [\xi(h)]\gr)=0$, which implies that 
$\Ebb_{\rm M}[[\xi(h)]]=0$ and $[\xi(h)] \gr\in D(\LN^{-1/2})$. 
We get 
$$\SS([\xi(h)])=\|(H-E)^{-1/2} [\xi(h)]\gr\|^2.$$
}\end{example}

\begin{example}\label{e4}
\rm{
Let $f(x,\xi)=(\gamma \cdot x)\xi(h)$.  We also see that
$\gr\in D((\gamma \cdot x)\xi(h))$ and 
$$\SS([(\gamma \cdot x)\xi(h)])=\|(H-E)^{-1/2} [(\gamma \cdot x)\xi(h)]\gr\|^2.$$
}
\end{example}

\begin{example}\label{e5}
\rm{
Let $f(z,\xi)=e^{z\xi(h)}$, where $z\in\CC$.  
By Remark \ref{NN} we can see that 
$\gr\in D(e^{z\xi(h)})$. 
We get 
$$\SS([e^{z\xi(h)}])=\|(H-E)^{-1/2} [e^{z\xi(h)}]\gr\|^2.$$
}
\end{example}

\begin{example}\label{e6}
\rm{
Consider 
non-relativistic  or 
massive relativistic Nelson Hamiltonians.
Let $f(x,\xi)=e^{\eps|x|}$, where $\eps>0$ is sufficiently small. 
By Remark \ref{X},  we see that 
$\gr\in D(e^{\eps |x|})$. 
We get 
$$\SS([e^{\eps |x|}])=\|(H-E)^{-1/2} [e^{\eps |x|}]\gr\|^2.$$
Furthermore under the same assumptions on $V$ and $\vp$ in Example \ref{e2} 
we see that $\gr\in D(\gamma\cdot x e^{\eps|x|})$ and 
$(\gr, \gamma\cdot x e^{\eps|x|}\gr)_\LP=0$ by the reflection symmetry. 
We get 
$$\SS(\gamma\cdot x e^{\eps|x|})=\|(H-E)^{-1/2} \gamma\cdot x e^{\eps|x|}\gr\|^2.$$
}
\end{example}

\noindent
{\bf Acknowledgements:}
FH is financially supported by JSPS Open Partnership Joint Projects between Japan and Tunisia
``Non-commutative infinite dimensional harmonic analysis: A unified approach from representation theory and probability
theory", and is also supported by
Grant-in-Aid for Scientific Research (B)16H03942
and 
Grant-in-Aid for Scientific Research (B)20H01808
from JSPS.

\bibliographystyle{amsplain}

\end{document}